\newtheorem{theorem}{Theorem}
\newtheorem{lemma}{Lemma}
\newtheorem{corollary}{Corollary}
\theoremstyle{definition}
\newtheorem{definition}{Definition}
\newtheorem{problem}{Problem}
\newtheorem{fact}{Fact}
\newtheorem{openproblem}{Open Problem}
\newcommand\smlg{\textsf{SMLG}\xspace}
\newcommand\pattern{\textsf{PATTERN}\xspace}
\newcommand\ov{\textsf{OV}\xspace}
\newcommand\ovh{\textsf{OVH}\xspace}
\newcommand\seth{\textsf{SETH}\xspace}
\newcommand\sic{\textsf{SIC}\xspace}
\newcommand\B{\ensuremath{\mathtt{b}}\xspace}
\newcommand\E{\ensuremath{\mathtt{e}}\xspace}
\newcommand\zero{\ensuremath{\mathtt{0}}\xspace}
\newcommand\one{\ensuremath{\mathtt{1}}\xspace}
\renewcommand\epsilon{\varepsilon}
\newcommand\atilde{\tilde{\text{a}}}
\newcommand\btilde{\tilde{\text{b}}}
\title{Graphs cannot be indexed in polynomial time for sub-quadratic time string matching, unless SETH fails}
\author{Massimo Equi}
\author{Veli M\"{a}kinen}
\author{Alexandru~I.~Tomescu}
\affil{Department of Computer Science, University of Helsinki\\\texttt{\{massimo.equi,veli.makinen,alexandru.tomescu\}@helsinki.fi}}
\begin{document}

\maketitle

\begin{abstract}
We consider the following string matching problem on a node-labeled graph $G=(V,E)$: given a pattern string $P$, decide whether there exists a path in $G$ whose concatenation of node labels equals $P$. This is a basic primitive in various problems in bioinformatics, graph databases, or networks. The hardness results of Backurs and Indyk (FOCS 2016) imply that this problem cannot be solved in better than $O(|E||P|)$ time, under the Orthogonal Vectors Hypothesis (\ovh), and this holds even under various restrictions on the graph (Equi et al.,~ICALP 2019). 

In this paper we consider its \emph{offline} version, namely the one in which we are allowed to index the graph in order to support time-efficient string matching queries. In fact, this version is the one relevant in practical application such as the ones mentioned above. While the online version has been believed to be hard even before the above-mentioned hardness results, it was tantalizing in the string matching community to believe that the offline version can allow for sub-quadratic time queries, e.g. at the cost of a high-degree polynomial-time indexing.

We disprove this belief, showing that, under \ovh, no polynomial-time indexing scheme of the graph can support querying $P$ in time $O(|E|^\delta|P|^\beta)$, with either $\delta < 1$ or $\beta < 1$. We prove this \emph{tight} bound employing a known self-reducibility technique, e.g.~from the field of dynamic algorithms, which translates conditional lower bounds for an online problem to its offline version. 

As a side-contribution, we formalize this technique with the notion of \emph{linear independent-components reduction}, allowing for a simple proof of our result. As another illustration that hardness of indexing follows as a corollary of a linear independent-components reduction, we also translate the quadratic conditional lower bound of Backurs and Indyk (STOC 2015) for the problem of matching a query string inside a text, under edit distance. We obtain an analogous \emph{tight} quadratic lower bound for its offline version, improving the recent result of Cohen{-}Addad, Feuilloley and Starikovskaya (SODA 2019), but with a slightly different boundary condition.
\end{abstract}

\newpage

\section{Introduction}

\subsection{Background}

The \emph{String Matching in Labeled Graphs (\smlg)} problem is defined as follows.
\begin{problem}[\smlg]
\item{\textsc{Input}:} A directed graph $G = (V,E,\ell)$, where each node $v \in V$ is labeled by a character $\ell(v)$, and a pattern string $P$.
\item{\textsc{Output}:} \emph{True} if and only if there is path $(v_1,v_2,\ldots,v_{|P|})$ in $G$ such that $P[i]=\ell(v_i)$ holds for all $1\leq i \leq |P|$.
\end{problem}

This is a natural generalization of the problem of matching a string inside a text, and it is a primitive in various problems in computational biology, graph databases, and graph mining. In genome research, the very first step of many standard analysis pipelines of high-throughput sequencing data is nowadays to align sequenced fragments of DNA on a labeled graph (a so-called \emph{pan-genome}) that encodes all genomes of a population~\cite{Sch09,Maretal18,hisat2,vg}. In graph databases, query languages provide the user with the ability to select paths based on the labels of their nodes or edges~\cite{AnglesGutierrez2008,FrancisGGLLMPRS18,Rodriguez15,sparqlquery}. In graph mining, this is a basic ingredient related to computing graph kernels~\cite{HidoK09} or node similarity~\cite{ConteFGMSU18}. 

The \smlg problem can be solved in time $O(|V| + |E||P|)$~\cite{amir1997pattern} in the comparison model. On acyclic graphs, bitparallelism can be used for improving the time to $O(|V| + |E|\lceil |P|/w\rceil)$~\cite{RMM19} in the RAM model with word size $w=\Theta(\log |E|)$. It remained an open question whether a truly sub-quadratic time algorithm for it exists. However, the recent conditional lower bounds by Backurs and Indyk \cite{BI16} for regular expression matching imply that the \smlg problem cannot be solved in sub-quadratic time, unless the so-called \emph{Orthogonal Vectors Hypothesis (\ovh)} is false. This result was strengthened by Equi et al.~\cite{Equi18} by showing that the problem remains quadratic under \ovh also on directed acyclic graphs (DAGs), that are even \emph{deterministic}, in the sense that for every node, the labels of its out-neighbors are all distinct.

As mentioned above, in real-world applications one usually considers the \emph{offline} version of the \smlg problem. Namely, we are allowed to index the labeled graph so that we can query for pattern strings in possibly sub-quadratic time. In the case when the graph is just a labeled (directed) path, then the problem asks about indexing a text string, which is a fundamental problem in string matching. There exists a variety of indexes constructable in \emph{linear time} supporting \emph{linear-time} queries \cite{jewels}. The same holds also when the graph is a tree~\cite{FMMN09}. A trivial indexing scheme for arbitrary graphs is to enumerate all the possibly exponentially many paths of the graph and index those as strings. So a natural question is whether we can at least index the graph in polynomial time to support sub-quadratic time queries. Note that the conditional lower bound for the online problem naturally refutes the possibility of an index constructable in sub-quadratic time to support sub-quadratic time queries. Even before the \ovh -based reductions, another weak lower bound was known to hold conditioned on the hardness of indexing for set intersection queries~\cite{Bil13} (see also \Cref{tab:bounds}). We discuss this connection to the \emph{Set Intersection Conjecture (\sic)} \cite{PR14,GLP19} in Appendix~\ref{app:sic}.

The connections to \sic and to \ovh constrain the possible construction and query time tradeoffs for \smlg, but they are yet not strong enough to prove the impossibility of building an index in polynomial time such that queries could be sub-quadratic, or even take time say $O(|E|^{1/2}|P|^2)$. This would be a significant result. In fact, given the wide applicability of this problem, there have been many attempts to obtain such indexing schemes. Sir\'en, V\"alim\"aki, and M\"akinen \cite{SVM14} proposed an extension of the \emph{Burrows-Wheeler transform} \cite{BW94} for prefix-sorted graphs. Standard indexing techniques \cite{GV05,FM05,NM07} can be applied on such generalized Burrows-Wheeler transform to support linear time pattern search. The bottleneck of the approach is the prefix-sorting step, which requires finding shortest prefixes for all paths such that they distinguish the nodes from each other. The size of the transform is still exponential in the worst case. However, unlike the trivial indexing scheme, it is linear in the best case, and also linear in the average case under a realistic model for genomics applications \cite{SVM14}.
There have been some advances in making the approach more practical \cite{Sir17,hisat2,vg}, but the exponential bottleneck has remained. Since in real-world scenarios approximate search is required on the graph, there have also been advances in expanding sparse dynamic programming and chaining algorithms \cite{Maketal19}, as well as the seed-and-extend strategy \cite{Eggertsson2017,RM19} to this setting. 

The concept of prefix-sorted graphs was later formalized into a more general concept of \emph{Wheeler graphs} \cite{GMS17}: Conceptually these are a class of graphs that admit a generalization of the Burrows-Wheeler transform, and thus an index of size linear in the size of the graph, supporting string search in linear time in the size of the query pattern.  Gibney and Thankachan showed that Wheeler graph recognition problem is NP-complete \cite{GT19}. Alanko et al.~\cite{ADPP20} give polynomial time solutions on some special cases and improve the prefix-sorting algorithm to work in near-optimal time in the size of the output. They also give an example where such output can be of exponential size even for \emph{acyclic deterministic finite automata} (acyclic DFA).
One could conjecture that conversion of a graph into an equivalent Wheeler graph is equally hard as indexing a graph for linear time string search, but as far as we know, such equivalence result has not yet been established. Therefore the hardness of indexing graphs is largely still open.

In this paper we refute the existence of such a polynomial indexing scheme for graphs, under \ovh. This contributes to a growing number of conditional lower bounds for offline string problem, such as the one for indexed \emph{jumbled pattern matching}~\cite{ACLL14}, conditioned on 3SUM-hardness, and the one for indexed \emph{approximate pattern matching under $\kappa$ differences}~\cite{CFS19}, conditioned on \seth.

Our result holds even for deterministic DAGs with labels from binary alphabet. By introducing a super-source connected to all source nodes and moving labels to incoming edges, such graphs can be interpreted as \emph{acyclic non-deterministic finite automata} (acyclic NFA) whose only non-deterministic state is the start state. It follows that determinisation of such simple NFAs cannot be done in polynomial time unless \ovh is false (and thus also unless \seth is false). This corollary complements the current picture of the exponential gap between NFAs and DFAs.

Table~\ref{tab:bounds} and Figure~\ref{fig:AlphaBetaDelta} summarize the complexity landscape around offline \smlg.

\begin{table}[t]
    \centering
    \begin{tabular}{c|c|c|c}
        \textbf{Graph} & \textbf{Indexing time} & \textbf{Query time} & \textbf{Reference, Year}\\\hline
        path & $O(|E|)$ & $O(|P|)$ & classical~\cite{jewels} \\\hline
        tree & $O(|E|)$ & $O(|P|)$ & \cite{FMMN09}, 2009\\\hline
        Wheeler graph & $O(|E|)$ & $O(|P|)$ & \cite{SVM14,GMS17}, 2014 \\\hline
        DAG & $O(|E|^{\alpha})$, $\alpha < 2$ & 
        \parbox{4cm}{\centering\smallskip $f(|P|)$ impossible~under~\sic\smallskip}
        & \cite{Bil13}, 2013 \\\hline
        arbitrary & $O(|E|^\alpha)$, $\alpha \leq \delta$ & 
        \parbox{4cm}{\centering\smallskip $O(|E|^\delta|P|^\beta)$, $\delta + \beta < 2$\\ impossible~under~\ovh \smallskip}
        & \cite{BI16}, 2016 \\\hline
        deterministic DAG & $O(|E|^\alpha)$, $\alpha \leq \delta$ & 
        \parbox{4cm}{\centering\smallskip $O(|E|^\delta|P|^\beta)$, $\delta + \beta < 2$\\ impossible~under~\ovh \smallskip}
        & \cite{Equi18}, 2019 \\\hline
        deterministic DAG & $O(|E|^\alpha)$, $\alpha \in \mathbb{R}$ & 
        \parbox{4cm}{\centering\smallskip $O(|E|^\delta|P|^\beta)$, $\delta + \beta < 2$\\ impossible~under~\ovh \smallskip}
        & This paper \\\hline
        arbitrary & $O(|E|^\alpha)$, $\alpha \in \mathbb{R}$ & 
        \parbox{5cm}{\centering\smallskip $O(|E|^\delta|P|^\beta)$, $\delta < 1$ or $\beta < 1$\\ impossible~under~\ovh \smallskip}
        & This paper \\\hline
    \end{tabular}
    \caption{Upper bounds (first three rows) and conditional lower bounds for offline \smlg on a graph $G = (V,E)$ and a pattern $P$. On the fourth line, $f(\cdot)$ is an arbitrary function.}
    \label{tab:bounds}
\end{table}

\begin{figure}[!ht]
    \begin{subfigure}[t]{0.5\textwidth}
        \centering
        \includegraphics[scale=1.1]{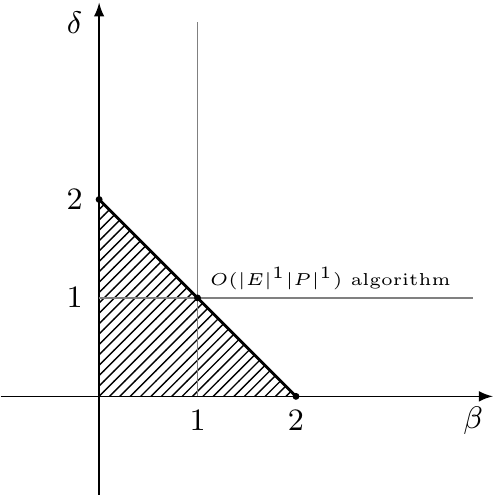}\caption{$\alpha \leq \delta$}
        \label{subfig:AlphaBetaDelta_1}
        \end{subfigure}\hfill
    \begin{subfigure}[t]{0.5\textwidth}
        \centering
        \includegraphics[scale=1.1]{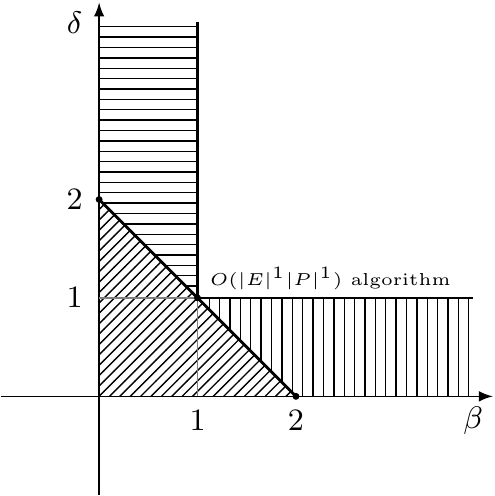}\caption{$\alpha \in \mathbb{R}$}
        \label{subfig:AlphaBetaDelta_2}
    \end{subfigure}
    \caption{ The dashed areas of the plots represent the forbidden values of $\delta$ and $\beta$ for $O(|E|^\delta|P|^\beta)$-time queries, under \ovh. Figure~\ref{subfig:AlphaBetaDelta_1} shows the lower bound that follows from the online case~\cite{BI16,Equi18}, and holds for $\alpha \leq \delta$. Figure~\ref{subfig:AlphaBetaDelta_2} depicts our lower bounds (tight, thanks to the online $O(|E||P|)$-time algorithm from~\cite{AmirLL00}). In addition, these hold for any value of $\alpha$.}
    \label{fig:AlphaBetaDelta}
\end{figure}

\subsection{Results}
\label{section:results}

In the Orthogonal Vectors (\ov) problem we are given two sets $X, Y \subseteq \{ 0,1 \}^d$ such that $|X| = |Y| = N$ and $d = \omega(\log N)$, and we need to decide whether there exists $x \in X$ and $y \in Y$ such that $x$ and $y$ are orthogonal, namely, $x \cdot y = 0$. \ovh states that for any constant $\epsilon > 0$, no algorithm can solve \ov in time $O(N^{2-\epsilon}\text{poly}(d))$. Notice that the better known \emph{Strong Exponential Time Hypothesis} (\seth)~\cite{IP01} implies \ovh~\cite{DBLP:journals/tcs/Williams05}, so all our lower bounds hold also under \seth.

Our results are obtained using a technique used for example in the field of dynamic  algorithms, see e.g.~\cite{DBLP:conf/focs/AbboudW14,DBLP:conf/focs/AbboudRW17}. Recall the reduction from 
$k$-SAT to \ov from~\cite{DBLP:journals/tcs/Williams05}: the $n$ variables of the formula $\phi$ are split into two groups of $n/2$ variables each, all partial $2^{n/2}$ Boolean assignments are generated for each group, and these induce two sets $X$ and $Y$ of size $N = 2^{n/2}$ each, such that \ov returns `yes' on $X$ and $Y$ if and only if $\phi$ is satisfiable. Suppose one could index $X$ to support $O(M^{2-\epsilon}\text{poly}(d))$-time queries for any set $Y$ of $M$ vectors, for some $\epsilon > 0$. One now can adjust the splitting of the variables based on the hypothetical $\epsilon$: the first part (corresponding to $X$) has $n\delta_{\epsilon}$ variables, and the other part (corresponding to $Y$) has $n(1-\delta_{\epsilon})$ variables. We can choose a $\delta_{\epsilon}$ depending on $\epsilon$ such that querying each vector in $Y$ against the index on $X$ takes overall time $O(2^{n(1-\gamma)})$, for some $\gamma > 0$, contradicting \seth. 

In this paper, instead of employing this technique inside the reduction for offline \smlg (as done in previous applications of this technique), we formalize the reason why it works through the notion of a \emph{linear independent-component reduction ($lic$)}. Such a reduction allows to immediately argue that if a problem $A$ is hard to index, and we have a $lic$ reduction from $A$ to $B$, then also $B$ is hard to index (\Cref{lemma:idx_B_implies_idx_A}). Since \ov is hard to index, it follows simply as a corollary that any problem to which \ov reduces is hard to index. In order to get the best possible result for \smlg, we also show that a generalized version of \ov is hard to index (\Cref{theorem:idx-NM-OV-contradicts-OVH}). As such, we upgrade the idea of an ``adjustable splitting'' of the variables from a technique to a directly transferable result, once a $lic$ reduction is shown to exist. 

Examples of problems to which a $lic$ reduction could be applied are those that arise from computing a distance between two elements. Popular examples are edit distance, dynamic time warping distance (DTWD), Frechet distance, longest common subsequence. All these problems have been shown to require quadratic time to be solved under \ovh. The reductions proving these lower bound for DTWD \cite{ABW15} and Frechet distance \cite{Bringmann2014walking} are indeed $lic$ reductions, hence these problems automatically obtain a lower bound also for their offline version. More specifically, \ovh implies that we cannot preprocess the first input of a DTWD or Frechet distance problem in polynomial time and provide sub-quadratic time queries for the second input.

On the other hand, the final sequences used in the reductions for edit distance \cite{BI15} and longest common subsequence \cite{ABW15} present some dependencies within each other, thus they would need to be slightly tweaked to make the definition of $lic$ reduction apply. These cross dependencies only concerns the size of the gadgets used in the reductions and not their structural properties, hence we are confident that the modifications needed to such gadgets require only a marginal effort. 

To easily explain this idea and to better understand the utility of a $lic$ reduction, let us consider edit distance. In a common offline variation of this problem, we are required to build a data structure for a long string $T$ such that one can decide if a given query string $P$ is within edit distance $\kappa$ from a substring of $T$. It suffices to observe that, in the reduction of Backurs and Indyk~\cite{BI15} from \ov to edit distance, this problem is utilized as an intermidiate step, and up to this point the reduction from \ov is indeed a $lic$ reduction (see \Cref{section:self-reducibility}). Hence, we immediately obtain the following result.

\begin{theorem}
\label{theorem:no_idx_pattern}
For any $\alpha>0$, $\beta\geq 1$, and $\delta>0$ such that $\beta+\delta<2$, there is no algorithm preprocessing a string $T$ in time $O(|E|^\alpha)$, such that for any pattern string $P$ we can find a substring of $T$ at minimum edit distance with $P$, in time $O(|T|^\delta|P|^\beta)$, unless \ovh is false.
\end{theorem}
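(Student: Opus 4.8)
The plan is to obtain \Cref{theorem:no_idx_pattern} as a direct corollary of the indexing-hardness transfer \Cref{lemma:idx_B_implies_idx_A}, instantiated on the reduction of Backurs and Indyk~\cite{BI15}, with the generalized \ov problem of \Cref{theorem:idx-NM-OV-contradicts-OVH} as the source problem and the offline approximate-pattern-matching problem as the target problem $B$: preprocess a text $T$ so that, for any query string $P$, one can return a substring of $T$ at minimum edit distance from $P$. Concretely, I would revisit the first part of the \cite{BI15} reduction: from an \ov instance $X,Y\subseteq\{0,1\}^d$ with $|X|=N$ and $|Y|=M$ it builds a text $T=T(X)$ that depends \emph{only} on $X$, a pattern $P=P(Y)$ that depends \emph{only} on $Y$, and a threshold $\kappa$, with $|T|=O(N\cdot\mathrm{poly}(d))$ and $|P|=O(M\cdot\mathrm{poly}(d))$, such that some substring of $T$ is within edit distance $\kappa$ of $P$ if and only if $X$ and $Y$ contain an orthogonal pair; since reporting a closest substring lets one compare the reported optimum against $\kappa$, this already decides \ov. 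The first step is then to check that this is a $lic$ reduction in the sense of \Cref{section:self-reducibility}: the two outputs are built from the two inputs separately (the only shared parameter is $d$), and their lengths are linear in $N$, respectively $M$, up to the $\mathrm{poly}(d)$ gadget size. It is essential to stop exactly at this intermediate ``substring under edit distance'' formulation, because the further step of~\cite{BI15} that turns it into a two-string edit-distance instance introduces cross-dependencies between the strings and is no longer a $lic$ reduction, as already noted in the introduction.

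Granting this, \Cref{lemma:idx_B_implies_idx_A} together with \Cref{theorem:idx-NM-OV-contradicts-OVH} immediately gives that $B$ cannot be indexed, and the quantitative bounds transfer because the reduction is \emph{linear}. Unwinding, the argument is the standard ``adjustable splitting'': suppose $T$ can be preprocessed in time $O(|T|^\alpha)$ and then queried in time $O(|T|^\delta|P|^\beta)$ for admissible $\alpha>0$, $\beta\ge1$, $\delta>0$ with $\beta+\delta<2$ (note these force $\delta<1$). Start from a $k$-SAT formula on $n$ variables, fix a constant $0<\sigma<1/\alpha$, put the $2^{\sigma n}$ partial assignments of a $\sigma n$-variable block into $X$ and build $T=T(X)$ of length $2^{\sigma n}\mathrm{poly}(n)$, preprocessed in time $2^{\sigma\alpha n}\mathrm{poly}(n)=2^{(1-\Omega(1))n}$. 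Then enumerate the $2^{(1-\sigma)n}$ partial assignments of the other block, packing each into a short pattern $P$ (a single vector suffices; group into constant-size blocks if the construction requires it, which is immaterial since $\beta\ge1$ and $|P|$ stays at most linear in $|T|$) and querying it at cost $O(|T|^\delta)\mathrm{poly}(n)=2^{\sigma\delta n}\mathrm{poly}(n)$; the queries total $2^{(1-\sigma)n}\cdot 2^{\sigma\delta n}\mathrm{poly}(n)=2^{(1-\sigma(1-\delta))n}\mathrm{poly}(n)=2^{(1-\Omega(1))n}$ because $\delta<1$. This decides $k$-SAT in time $2^{(1-\Omega(1))n}$, contradicting \seth and hence \ovh; invoking \Cref{theorem:idx-NM-OV-contradicts-OVH} rather than the balanced \ov is exactly what lets $\alpha$ range over all positive reals.

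I expect the exponent bookkeeping above to be routine once the reduction is known to be $lic$; the actual work lies in discharging the two hypotheses. First, one must verify that the \cite{BI15} construction, read only up to the ``substring under edit distance'' step, genuinely produces $T$ from $X$ alone and $P$ from $Y$ alone, with gadget and threshold sizes that are truly $O(\mathrm{poly}(d))$, so that the $\mathrm{poly}(d)$ overhead is absorbed into the $\mathrm{poly}(n)$ slack (here one uses that in the \seth-hard instances $d$ may be taken to be $\mathrm{poly}\log N$, in particular $d=\omega(\log N)$). Second, one must line up the boundary conditions $\alpha>0$, $\beta\ge1$, $\delta>0$, $\beta+\delta<2$ with the precise regime in which \Cref{theorem:idx-NM-OV-contradicts-OVH} asserts hardness of indexing \ov, so that \Cref{lemma:idx_B_implies_idx_A} applies off the shelf without any extra argument. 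I would also flag, in passing, that the $O(|E|^\alpha)$ appearing in the statement should read $O(|T|^\alpha)$.
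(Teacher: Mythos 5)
Your proposal is correct and follows essentially the same path as the paper: recognize that the Backurs--Indyk construction, read only up to the intermediate ``substring of $T$ within edit distance of $P$'' step, is a $lic$ reduction with parameter $d$, and then plug it into the indexing-transfer machinery; you also correctly flag the $O(|E|^\alpha)$-vs-$O(|T|^\alpha)$ typo. The one minor difference is that you route the argument through \Cref{theorem:idx-NM-OV-contradicts-OVH} (and hence implicitly \Cref{corollary:no_idx_P_if_NM-OV_reduction}), i.e.\ the generalized $(N,M)$-\ov hardness, whereas the paper invokes the lighter \Cref{corollary:no_idx_P_if_standard_OV_reduction} built on the balanced-\ov statement \Cref{theorem:idx_standard_OV_contradicts_OVH}; since $\beta\ge 1$ together with $\beta+\delta<2$ already forces $\delta<1$, both instantiations apply, so your version is simply using a stronger tool than strictly needed for this theorem (it would, however, be the right tool if one wanted to drop the $\beta+\delta<2$ constraint, as the paper does for \smlg in \Cref{theorem:no_idx_SMLG_cycles}).
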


This bound is tight because for $\delta + \beta = 2$ there is a matching online algorithm~\cite{MP80}. \Cref{theorem:no_idx_pattern} also strenghtens the recent result of Cohen{-}Addad, Feuilloley and Starikovskaya~\cite{CFS19}, stating that an index built in polynomial time cannot support queries for approximate string matching in $O(|T|^{\delta})$ time, for any $\delta < 1$, unless \seth is false. However, the boundary condition is different, since in their case $\kappa=O(\log |T|)$, while in our case $\kappa = \Theta(|P|)$.

Our approach for the \smlg problem is similar. In \Cref{section:indexing-labeled-graphs} we revisit the reduction from~\cite{EGMT19} and observe that it is a $lic$ reduction. As such, we can immediately obtain the following result.

\begin{theorem}
\label{theorem:indexing-dags}
For any $\alpha>0$, $\beta\geq 1$, and $\delta>0$ such that $\beta+\delta<2$, there is no algorithm preprocessing a labeled graph $G = (V,E,\ell)$ in time $O(|E|^\alpha)$ such that for any pattern string $P$ we can solve the \smlg problem on $G$ and $P$ in time $O(|E|^\delta|P|^\beta)$, unless \ovh is false.
This holds even if restricted to a binary alphabet, and to deterministic DAGs in which the sum of out-degree and in-degree of any node is at most three.\footnote{We implicitly assumed here that the graph $G$ is the part of the input on which to build the index, because it is the first input to \smlg. However, by exchanging $G$ and $P$, it trivially holds that we also cannot polynomially index a pattern string $P$ to support fast queries in the form of a labeled graph.}
\label{corollary:no_idx_SMLG}
\end{theorem}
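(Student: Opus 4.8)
The plan is to derive the statement as a direct corollary of the linear independent-components machinery, by verifying that the known \ov-hardness reduction for \smlg is itself a $lic$ reduction. Concretely, I would recall the reduction of Equi et al.~\cite{EGMT19}: it maps an \ov instance $(X,Y)$ with $X,Y\subseteq\{0,1\}^d$ and $|X|=|Y|=N$ to a graph $G=G(X,d)$ and a pattern $P=P(Y,d)$ such that $\smlg(G,P)$ is true exactly when some $x\in X$ is orthogonal to some $y\in Y$; moreover the produced $G$ is a deterministic DAG over the binary alphabet in which the sum of the in- and out-degrees of every node is at most three, with $|E(G)|=\Theta(Nd)$ and $|P|=\Theta(Nd)$. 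The first step is therefore to re-examine this construction and certify the defining properties of a $lic$ reduction.

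Those properties are three. First, \emph{separation of sides}: $G(X,d)$ is a function of the ``graph side'' $X$ (and of $d$) only, while $P(Y,d)$ is a function of the ``pattern side'' $Y$ (and $d$) only; this is immediate from the construction. Second, \emph{independent components}: the vectors of $Y$ are the components, $P(Y,d)$ is assembled from one rigidly structured per-vector gadget $P(y,d)$ for each $y\in Y$, glued together by connector strings that do not depend on $Y$, and running the construction on any subcollection $Y'\subseteq Y$ reproduces the very same graph $G(X,d)$ paired with the pattern $P(Y',d)$; consequently, for every partition $Y=Y_1\sqcup\dots\sqcup Y_t$ one has $\smlg(G(X,d),P(Y,d))=\bigvee_{i=1}^t\smlg(G(X,d),P(Y_i,d))$, matching the ``OR over pairs'' shape of \ov. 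Third, \emph{linearity}: every gadget and connector has size $\Theta(d)$, so $|E(G(X,d))|=\Theta(|X|d)$ and $|P(Y',d)|=\Theta(|Y'|d)$; in particular a single-vector query pattern $P(\{y\},d)$ has length $\Theta(d)$. Since the $lic$ framework re-uses $G(X,d)$ verbatim, the binary alphabet, acyclicity, determinism, and total-degree-$\le 3$ guarantees are inherited with no extra work. I expect the second property to be the only delicate point, as it requires inspecting how~\cite{EGMT19} wires the $y$-gadgets together, confirming that a lone $y$-gadget already realizes ``$y$ orthogonal to some $x\in X$'' and that the boundary and connector strings stay independent of $Y$; should a mild dependency surface, it would affect only gadget lengths (absorbed into the $\Theta(d)$/poly-in-$d$ slack), exactly as the paper notes for the related edit-distance and LCS reductions.

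With the reduction certified as $lic$, \Cref{lemma:idx_B_implies_idx_A} applied to it, together with \Cref{theorem:idx-NM-OV-contradicts-OVH} (generalized \ov cannot be indexed under \ovh), yields the claim. Unfolding this for intuition: a purported index for \smlg built in time $O(|E|^\alpha)$ with $O(|E|^\delta|P|^\beta)$-time queries would solve \ov on $(X,Y)$ by building and indexing $G(X,d)$ (which has $|E|=\Theta(Nd)$) in time $O((Nd)^\alpha)$ and then querying each length-$\Theta(d)$ pattern $P(\{y\},d)$, $y\in Y$, and OR-ing the answers, at total cost $|Y|\cdot O((Nd)^\delta d^\beta)$. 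Using the adjustable split of the \seth-to-\ov reduction~\cite{DBLP:journals/tcs/Williams05}, place $ns$ of the $n$ variables on the $X$-side (so $N=2^{ns}$) and $n(1-s)$ on the $Y$-side (so $|Y|=2^{n(1-s)}$), with $d=\text{poly}(n)$; the indexing then costs $2^{\alpha s n}\,\text{poly}(d)$ and the querying $2^{n(1-s(1-\delta))}\,\text{poly}(d)$. Since $\beta\ge 1$ and $\beta+\delta<2$ force $\delta<1$, the query phase is $2^{(1-\Omega(1))n}\,\text{poly}(d)$ for every $s>0$, and choosing in addition $s<1/\alpha$ — possible for any fixed $\alpha>0$ — makes the indexing phase $2^{(1-\Omega(1))n}\,\text{poly}(d)$ as well, so the whole instance is decided in $2^{(1-\Omega(1))n}\,\text{poly}(d)$ time, contradicting \ovh. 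The bound is tight because the online algorithm of~\cite{AmirLL00} runs in $O(|E||P|)$, reaching $\delta+\beta=2$. Finally, since the two inputs of \smlg enter \Cref{lemma:idx_B_implies_idx_A} symmetrically, swapping the roles of $G$ and $P$ in the same reduction gives the companion claim of the footnote, that a pattern string cannot be polynomially indexed to support fast queries presented as a labeled graph. The main obstacle is thus the verification of the second $lic$ property above; everything else is a direct application of the two lemmas.
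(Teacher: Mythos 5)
Your overall route matches the paper's: certify that the reduction of Equi et al.\ is a $lic$ reduction, then apply the black-box machinery. Two points are worth flagging, however.

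First, the theorem you invoke is not quite the right one. You combine \Cref{lemma:idx_B_implies_idx_A} with \Cref{theorem:idx-NM-OV-contradicts-OVH} (generalized $(N,M)$-\ov), but the $lic$ reduction you certify is from \emph{standard} \ov (the DAG construction requires $|P|\le|E|$, so it cannot realize $(N,M)$-\ov for $M>N$; handling that case is exactly what the cycle modification in the next theorem is for). The composition of \Cref{lemma:idx_B_implies_idx_A} with $\ov\leq_{lic}^d\smlg$ yields $(\alpha,\delta,\beta)$-polynomial indexability of \emph{square} \ov, so the theorem you need to finish is \Cref{theorem:idx_standard_OV_contradicts_OVH}, or equivalently \Cref{corollary:no_idx_P_if_standard_OV_reduction}. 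This is a mis-citation rather than a conceptual error, especially since the paper remarks that \Cref{theorem:idx_standard_OV_contradicts_OVH} is a special case of \Cref{theorem:idx-NM-OV-contradicts-OVH}.

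Second, and more substantively: your middle paragraph introduces a property that the $lic$ definition does not ask for. The ``independent components'' in the definition refers only to the fact that $r_x$ depends on $a_x$ alone and $r_y$ on $a_y$ alone; it does \emph{not} require that the pattern $P(Y,d)$ decompose into per-vector gadgets $P(\{y\},d)$ with the OR-distributivity $\smlg(G,P(Y))=\bigvee_i\smlg(G,P(Y_i))$. You declare this the ``only delicate point'' of the certification, but the paper's own proof never needs it: it suffices to observe that $G$ is built solely from $X$, $P$ solely from $Y$, both in $O(dN)$ time. You are implicitly re-doing, inside the reduction, work that the framework does once and for all in the proof of \Cref{theorem:idx_standard_OV_contradicts_OVH}. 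Related to this, your ``unfolding for intuition'' runs the SAT-variable split and concludes by contradicting \seth, then labels this ``contradicting \ovh''. Since \seth is the stronger hypothesis ($\seth\Rightarrow\ovh$), refuting \seth is a weaker conclusion than refuting \ovh; to get the stated \ovh conclusion one must instead perform the self-reduction of \ov into $(N,M)$-\ov sub-instances directly, as the paper does in the proof of \Cref{theorem:idx-NM-OV-contradicts-OVH}. None of this invalidates your formal argument --- once you cite \Cref{corollary:no_idx_P_if_standard_OV_reduction} the conclusion follows --- but the intuition as written proves the weaker \seth-conditional statement.
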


This lower bound is tight because for $\delta + \beta = 2$ there is a matching online algorithm~\cite{amir1997pattern}. However, this bound does not disprove a hypothetical polynomial indexing algorithm with query time $O(|E|^{\delta}|P|^{2})$, for some $0 < \delta < 1$. Since graphs in practical applications are much larger than the pattern, such an algorithm would be quite significant. However, when the graph is allowed to have cycles, we also show that this is impossible under \ovh. 

\begin{theorem}
\label{theorem:no_idx_SMLG_cycles}
For any $\alpha>0$, $\beta\geq 1$, and $0 < \delta < 1$, there is no algorithm preprocessing a labeled graph $G = (V,E,\ell)$ in time $O(|E|^\alpha)$ such that for any pattern string $P$ we can solve the \smlg problem on $G$ and $P$ in time $O(|E|^\delta|P|^\beta)$, unless \ovh is false.
\end{theorem}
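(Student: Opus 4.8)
The plan is to give a linear independent-components ($lic$) reduction from the generalized Orthogonal Vectors problem to \smlg on graphs \emph{with cycles}, and then to invoke \Cref{lemma:idx_B_implies_idx_A} together with \Cref{theorem:idx-NM-OV-contradicts-OVH}. The structural point that separates this from the acyclic case (\Cref{theorem:indexing-dags}, which merely inherits the reduction of~\cite{EGMT19}, where the pattern has to encode the \emph{whole} set $Y$) is that here the graph will encode only the set $X$, while each query will be a \emph{single short} pattern encoding one vector $y\in Y$, of length $\poly(d)$ only --- in particular independent of $|X|$ and $|Y|$. Since $\beta$ is a constant, such a pattern contributes only a factor $|P|^{\beta}=\poly(d)$ to a query time of the form $O(|E|^{\delta}|P|^{\beta})$, so the hypothetical index would answer each orthogonality query in time $O(|E|^{\delta}\poly(d))$, which is exactly the regime ruled out by \Cref{theorem:idx-NM-OV-contradicts-OVH} once $\delta<1$, and this for \emph{every} $\alpha>0$ and \emph{every} $\beta\ge 1$.

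Concretely, I would first fix, for every $x\in X$, a small labeled gadget $VG(x)$ over the binary alphabet --- essentially the vector gadget from the online reduction of~\cite{EGMT19} --- with $\poly(d)$ nodes and a designated entry and exit, together with a pattern $PG(y)$ of length $\poly(d)$, so that $PG(y)$ spells a path from the entry to the exit of $VG(x)$ if and only if $x\cdot y=0$. Then I would assemble the graph $G_X$ from the $N:=|X|$ copies $VG(x^{(1)}),\dots,VG(x^{(N)})$, but instead of laying them out as $N$ disjoint chains I would route all of them between a common entry and a common exit joined by back edges, so that a single walk can reach, and thereby be tested against, any of the $N$ gadgets; the cycle(s) are precisely what let one fixed short pattern ``scan'' all of $X$. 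After padding so that all $VG(x^{(j)})$ have the same length and prefixing the query with a short guard string $\sigma$, the pattern $P_y:=\sigma\cdot PG(y)$ can be matched only by a walk that enters some $VG(x^{(j)})$ at its entry and reads it off in full, so $P_y$ matches a path in $G_X$ iff $\exists j:\ x^{(j)}\cdot y=0$, and $|E(G_X)|=O(N\,\poly(d))$. Building the index on $G_X$ (preprocessing $O(|E|^{\alpha})=O(N^{\alpha}\poly(d))$) and then issuing one query $P_y$ for each $y\in Y$ (total query time $O(|Y|\cdot N^{\delta}\poly(d))$) decides generalized \ov on $(X,Y)$.

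It then remains to check that this map satisfies the formal definition of a $lic$ reduction from \Cref{section:self-reducibility}: the $X$-part becomes a single indexed object $G_X$ of size linear in $|X|$ up to $\poly(d)$ factors, the $Y$-part decomposes into $|Y|$ independent components, each a single vector mapped to an independent query pattern of size $\poly(d)$, and correctness is componentwise as above. \Cref{lemma:idx_B_implies_idx_A} then transports the indexing hardness of generalized \ov (\Cref{theorem:idx-NM-OV-contradicts-OVH}), and this is where the ``adjustable splitting'' does its work: taking $|X|=2^{\theta n}$ and $|Y|=2^{(1-\theta)n}$ for a small enough constant $\theta<1/\alpha$, the preprocessing cost is $2^{\theta\alpha n}\poly(d)=2^{(1-\Omega(1))n}\poly(d)$, and because $\delta<1$ the total query cost $2^{(1-\theta)n+\theta\delta n}\poly(d)=2^{(1-\theta(1-\delta))n}\poly(d)$ is also $2^{(1-\Omega(1))n}\poly(d)$, contradicting \ovh.

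The main obstacle I anticipate is the gadget engineering inside $G_X$: getting a single short pattern to correctly scan all of $X$ through the back edges \emph{without} spurious matches --- i.e.\ verifying that every length-$|P_y|$ path that spells $P_y$ genuinely corresponds to entering one $VG(x^{(j)})$ at its entry and reading it in full, and that no walk exploiting a back edge, a truncated gadget, or a boundary between two consecutive gadgets can accidentally match. This is exactly the place where determinism is given up relative to \Cref{theorem:indexing-dags}, and it is where most of the proof effort would go (the uniform padding of the gadgets, the choice of $\sigma$, and keeping the alphabet binary). The remaining steps --- the size bookkeeping, the verification that the construction meets the definition of a $lic$ reduction, and the arithmetic of the split --- are routine given \Cref{lemma:idx_B_implies_idx_A} and \Cref{theorem:idx-NM-OV-contradicts-OVH}.
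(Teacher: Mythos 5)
Your approach deviates substantially from the paper's proof, and in ways that open real gaps. The paper keeps the EGMT19 pattern encoding \emph{all} of $Y$ (of length $O(dM)$); the only change to the graph is a back edge in the first top universal gadget $G_{U1}^{(1)}$ and another in the last bottom one $G_{U2}^{(2N-2)}$, so that when $M > N$ the excess prefix/suffix of $P$ wraps around rather than falling off the graph. Those cycles therefore exist to absorb a pattern that is \emph{longer than the graph} --- not, as you write, to let a short pattern ``scan all of $X$''. \smlg asks whether \emph{some} path spells $P$, so merging the $N$ vector gadgets at a common source and sink already lets one short $P_y$ be tested against every $x^{(j)}$ by nondeterministic branching; the back edges you add serve no purpose towards that goal. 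But then your construction, drawn correctly and without the unneeded cycles, is a non-deterministic \emph{DAG}, and if it were correct it would resolve the paper's Open Problem 1, which the authors explicitly leave open. That is strong evidence that the step you defer as ``gadget engineering'' (forbidding spurious matches once all $N$ gadgets share a common entry and exit, over a binary alphabet) is the genuine obstruction here, not a routine verification.

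There is also a formal mismatch with the framework. In \Cref{def:independent-components_reduction}, ``independent components'' refers to the two coordinates $a_x$ and $a_y$ of the input being transformed by separate maps $r_x$ and $r_y$; it does not license $r_y$ to output $|Y|$ independent query patterns to be issued one by one. Your map from $Y$ to the family $\{P_y\}_{y\in Y}$ is therefore not a $lic$ reduction, so \Cref{lemma:idx_B_implies_idx_A} and \Cref{corollary:no_idx_P_if_NM-OV_reduction} cannot be invoked as you propose. What you then do by hand in your third paragraph --- the exponential ``adjustable split'' starting from $|X|=2^{\theta n}$ and $|Y|=2^{(1-\theta)n}$ --- is essentially the internal content of \Cref{theorem:idx-NM-OV-contradicts-OVH}, so ``invoking'' that theorem is circular (and as written it contradicts \seth, via the $k$-SAT split, rather than \ovh directly). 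The paper avoids all of this: because $r_y$ already maps the whole $Y$ to a single pattern, the $lic$ machinery applies out of the box, and the only concrete fix needed is the pair of back edges.
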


We obtain \Cref{theorem:no_idx_SMLG_cycles} by slightly modifying the reduction of~\cite{EGMT19} with the introduction of certain cycles, that are necessary to allow for query patterns of length longer than the graph size. We leave as open question whether the lower bound from \Cref{theorem:no_idx_SMLG_cycles} holds also for DAGs. 

\begin{openproblem}
Do there exist $\alpha > 0$, $\beta \geq 1$, $0 < \delta < 1$, and an algorithm preprocessing a labeled (deterministic) DAG $G = (V,E,\ell)$ in time $O(|E|^\alpha)$ such that for any pattern string $P$ we can solve the \smlg problem on $G$ and $P$ in time $O(|E|^\delta|P|^\beta)$?
\end{openproblem}

\section{Formalizing the technique}
\label{sec:definitions}

\subsection{Linear independent-components reductions}
All problems considered in this paper are such that their input is naturally partitioned in two. For a problem $P$, we will denote by $P_X \times P_Y$ the set of all possible inputs for $P$. For a particular input $(p_x,p_y) \in P_X \times P_Y$, we will denote by $|p_x|$ and $|p_y|$ the length of each of $p_x$ and $p_y$, respectively. Intuitively, $p_x$ represents what we want to build the index on, while $p_y$ is what we want to query for. We start by formalizing the concept of \emph{indexability}.

\begin{definition}[Indexability]
\label{def:indexable}
Problem $P$ is ($I$,$Q$)\emph{-indexable} if for every $p_x \in P_X$ we can preprocess $p_x$ in time $I(|p_x|)$ such that for every $p_y \in P_Y$ we can solve $P$ on $(p_x,p_y)$ in time $Q(|p_x|,|p_y|)$.
\end{definition}

We further refine this notion into that of \emph{polynomial indexability}, by specifying the degree of the polynomial costs of building the index and of performing the queries.

\begin{definition}[Polynomial indexability]
\label{def:poly_indexable_param}
Problem $P$ is ($\alpha, \delta, \beta$)\emph{-polynomially indexable with parameter $k$} if $P$ is ($I$,$Q$)-indexable and $I(|p_x|) = O(k^{O(1)}|p_x|^\alpha)$ and $Q(k^{O(1)}|p_x|,|p_y|)=O(|p_x|^\delta|p_y|^\beta)$. If $k = O(1)$, then we say that $P$ is ($\alpha, \delta, \beta$)\emph{-polynomially indexable}.
\end{definition}

The introduction of parameter $k$ is needed to be consistent with \ovh, since when proving a lower bound conditioned on \ovh, the reduction is allowed to be polynomial in the vector dimension $d$. As we will see, we will set $k=d$.

We now introduce linear independent-components reductions, which we show below in \Cref{lemma:idx_B_implies_idx_A} to maintain $(\alpha,\delta,\beta)$-polynomial indexability.

\begin{definition}[$lic$ reduction]
\label{def:independent-components_reduction}
Problem $A$ has a \emph{linear independent-components ($lic$) reduction with parameter $k$} to problem $B$, indicated as $A \leq_{lic}^k B$, if the following two properties hold:
\begin{itemize}
    \item[i)] \textbf{Correctness}: There exists a reduction from $A$ to $B$ modeled by functions $r_x$, $r_y$ and $s$. That is, for any input $(a_x,a_y)$ for $A$, we have $r_x(a_x) = b_x$, $r_y(a_y) = b_y$, $(b_x, b_y)$ is a valid input for $B$, and $s$ solves $A$ given the output $B(b_x, b_y)$ of an oracle to $B$, namely $s(B(r(a_x),r(a_y)))=A(a_x,a_y)$.
    \item[ii)] \textbf{Parameterized linearity}: Functions $r_x$, $r_y$ and $s$ can be computed in linear time in the size of their input, multiplied by $k^{O(1)}$.
\end{itemize}
\end{definition}

\begin{lemma}
\label{lemma:idx_B_implies_idx_A}
Given problems $A$ and $B$ and constants $\alpha>0, \delta>0, \beta\geq1$, if $A \leq_{lic}^k B$ holds, and $B$ is ($\alpha, \delta, \beta$)-polynomially indexable, then $A$ is ($\alpha, \delta, \beta$)-polynomially indexable with parameter~$k$.
\end{lemma}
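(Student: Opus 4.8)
The plan is to chase the definitions directly. Suppose $A \leq_{lic}^k B$ via functions $r_x, r_y, s$, and suppose $B$ is $(\alpha,\delta,\beta)$-polynomially indexable, so there is an indexing scheme with preprocessing time $O(|b_x|^\alpha)$ and query time $O(|b_x|^\delta |b_y|^\beta)$ (with the $k=O(1)$ convention absorbed into constants on the $B$ side). I want to build an indexing scheme for $A$ with preprocessing time $O(k^{O(1)}|a_x|^\alpha)$ and query time $O(|a_x|^\delta|a_y|^\beta)$, which is exactly what $(\alpha,\delta,\beta)$-polynomial indexability with parameter $k$ demands.

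The construction: to preprocess $a_x \in A_X$, first compute $b_x = r_x(a_x)$, then run the $B$-indexing scheme on $b_x$. To answer a query $a_y \in A_Y$, compute $b_y = r_y(a_y)$, query the $B$-index to obtain $B(b_x,b_y)$, then output $s(B(b_x,b_y))$; correctness of this output is immediate from property (i). The key observation that makes the time bounds go through is that $r_x$ and $r_y$ run in linear time up to a $k^{O(1)}$ factor, so $|b_x| = O(k^{O(1)}|a_x|)$ and $|b_y| = O(k^{O(1)}|a_y|)$. Then preprocessing costs $O(k^{O(1)}|a_x|)$ (to compute $r_x$) plus $O(|b_x|^\alpha) = O((k^{O(1)}|a_x|)^\alpha) = O(k^{O(1)}|a_x|^\alpha)$ (to index $b_x$); since $\alpha$ is a constant, the $k^{O(1)}$ factor stays a $k^{O(1)}$ factor, and the whole thing is $O(k^{O(1)}|a_x|^\alpha)$ as required. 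For queries, computing $r_y(a_y)$ costs $O(k^{O(1)}|a_y|)$, computing $s$ costs $O(k^{O(1)}|B(b_x,b_y)|)$ which is negligible (the output of $B$ is small, e.g.\ a single bit, or in any case bounded by the query time), and the index query costs $O(|b_x|^\delta|b_y|^\beta) = O((k^{O(1)}|a_x|)^\delta(k^{O(1)}|a_y|)^\beta) = O(k^{O(1)}|a_x|^\delta|a_y|^\beta)$.

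The one point that needs care — and what I'd flag as the main obstacle, though it is minor — is matching the exact form of \Cref{def:poly_indexable_param}. That definition writes the query bound as $Q(k^{O(1)}|p_x|,|p_y|) = O(|p_x|^\delta|p_y|^\beta)$, i.e.\ the $k$-blowup is already built into the \emph{first argument slot} of $Q$ and the bound is in terms of the original sizes. So I should phrase the $A$-index as: its query time on $(a_x,a_y)$, expressed as a function of $|b_x| = k^{O(1)}|a_x|$ and $|a_y|$, is $O(|a_x|^\delta|a_y|^\beta)$ — which is precisely the shape demanded, once I note that $|b_y| = O(k^{O(1)}|a_y|)$ contributes only a constant-in-$|a_y|$ factor $k^{O(\beta)} = k^{O(1)}$ that can be carried along (the parameter-$k$ version of indexability permits exactly such factors). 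Similarly the preprocessing bound $I(|p_x|) = O(k^{O(1)}|p_x|^\alpha)$ matches what we derived. I'd also remark that $\beta \geq 1$ and $\delta, \alpha > 0$ are used only to ensure the exponents behave as constants when pulling $k^{O(1)}$ out of $(k^{O(1)} \cdot n)^c$; no arithmetic subtlety beyond that.

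Finally I'd close by observing that the contrapositive is what we actually use downstream: if $A$ is \emph{not} $(\alpha,\delta,\beta)$-polynomially indexable with parameter $k$ (which will be the case for \ov under \ovh), then $B$ is not $(\alpha,\delta,\beta)$-polynomially indexable. Combined with \Cref{theorem:idx-NM-OV-contradicts-OVH} and the fact that the reductions of \cite{EGMT19,BI15} are $lic$ reductions, this yields Theorems~\ref{theorem:no_idx_pattern}, \ref{theorem:indexing-dags}, and \ref{theorem:no_idx_SMLG_cycles}.
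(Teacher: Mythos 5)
Your proof is correct and follows essentially the same route as the paper: preprocess $a_x$ by computing $b_x = r_x(a_x)$ and indexing it via $B$'s scheme, query by computing $b_y = r_y(a_y)$ and post-processing with $s$, then chase the $k^{O(1)}$ blowup through the time bounds. One small inaccuracy worth noting: you say $\beta\geq 1$ and $\delta>0$ are used ``only to ensure the exponents behave as constants when pulling $k^{O(1)}$ out of $(k^{O(1)}\cdot n)^c$,'' but that step works for any constant exponent; the actual place $\beta\geq 1$ and $\delta>0$ earn their keep is in absorbing the $O(k^{O(1)}|a_y|)$ cost of computing $r_y(a_y)$ into the $O(k^{O(1)}|a_x|^\delta|a_y|^\beta)$ query term, since $|a_y| = O(|a_y|^\beta)$ needs $\beta\geq 1$ and $1=O(|a_x|^\delta)$ needs $\delta>0$. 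This does not affect the validity of your argument, as you list all the cost terms correctly even though you do not spell out the final combination.
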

\begin{proof}
Let $a_x \in A_X$ be the first input of problem $A$. The linear independent-components reduction computes the first input of problem $B$ as $b_x = r_x(a_x)$ in time $O(k^{O(1)}|a_x|)$. This means that $|b_x| = O(k^{O(1)}|a_x|)$, since the size of the data structure that we build with the reduction cannot be greater than the time spent for performing the reduction itself. Problem $B$ is ($\alpha, \delta, \beta$)-polynomially indexable, hence we can build an index on $b_x$ in time $O(|b_x|^\alpha)$ in such a way that we can perform queries for every $b_y$ in time $O(|b_x|^\delta|b_y|^\beta)$. Now given any input $a_y$ for $A$ we can compute its corresponding $b_y = r_y(a_y)$ via the reduction in time $O(k^{O(1)}|a_y|)$ and answer a query for it using the index that we built on $b_x$. Again, notice that $|b_y| = O(k^{O(1)}|a_y|)$. The cost for such a query is $O(k^{O(1)}|a_y| + |b_x|^\delta|b_y|^\beta) = O(k^{O(1)}|a_y| + k^{O(1)}|a_x|^\delta |a_y|^\beta)$, which, since $\delta>0$, is the same as $O(k^{O(1)}|a_x|^\delta |a_y|^\beta)$ when $\beta\geq1$. Notice that the indexing time is $O(|b_x|^\alpha) = O(k^{O(1)}|a_x|^\alpha)$. Hence $A$ is ($\alpha, \delta, \beta$)-polynomially indexable with parameter $k$, when $\beta\geq1$.
\end{proof}

\subsection{Conditional indexing lower bounds}
\label{section:self-reducibility}

We begin by stating, with our formalism, a known strengthening of the hardness of indexing reduction presented at the beginning of \Cref{section:results} (note that it also follows as a special case of \Cref{theorem:idx-NM-OV-contradicts-OVH} below).

\begin{theorem}[Folklore]
\label{theorem:idx_standard_OV_contradicts_OVH}
If \ov is ($\alpha, \delta, \beta$)-polynomially indexable with parameter $d$, and $\beta + \delta < 2$, then \ovh fails.
\end{theorem}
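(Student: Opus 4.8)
The plan is to assume that \ov is $(\alpha,\delta,\beta)$-polynomially indexable with parameter $d$ and to convert this into an algorithm solving an arbitrary \emph{balanced} instance of \ov --- two sets $X,Y\subseteq\{0,1\}^d$ with $|X|=|Y|=N$ and $d=\omega(\log N)$ --- in time $O(N^{2-\epsilon}d^{O(1)})$ for some constant $\epsilon>0$, which directly contradicts \ovh. The naive attempt (index all of $X$, then query each vector of $Y$ against it) breaks down because nothing constrains $\alpha$: the preprocessing alone costs $O(d^{O(1)}N^{\alpha})$, already super-quadratic once $\alpha>2$. The crux of the argument is therefore to \emph{never index a large set}. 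I would fix a small constant $\gamma\in(0,1)$ (pinned down at the end), partition $X$ into $\lceil N^{1-\gamma}\rceil$ blocks $X_1,\dots,X_t$ of size at most $N^{\gamma}$, and process the blocks one at a time: build the index on $X_i$, run the relevant queries against it, discard it, move to $X_{i+1}$.

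With this partition the total preprocessing is $O(N^{1-\gamma})$ index constructions of cost $O(d^{O(1)}N^{\gamma\alpha})$ each, i.e.\ $O(d^{O(1)}N^{1+\gamma(\alpha-1)})$, which is sub-quadratic exactly when $\gamma(\alpha-1)<1$. For the queries, in the regime $\delta<1$ I would query the $N$ vectors of $Y$ individually against every block's index, at cost $O(N^{\gamma\delta}d^{O(1)})$ per query, hence $O(N^{1-\gamma}\cdot N\cdot N^{\gamma\delta}d^{O(1)})=O(N^{2-\gamma(1-\delta)}d^{O(1)})$ overall, which is sub-quadratic for every $\gamma>0$. (If instead $\delta\ge 1$, then $\delta+\beta<2$ forces $\beta<1$; here I would query all of $Y$ against each block in a single call, at cost $O(N^{\gamma\delta}N^{\beta}d^{O(1)})$ per block, and a short estimate using $\delta+\beta<2$ and $\gamma<1$ gives $1+\beta-\gamma(1-\delta)<2$, so this phase is again sub-quadratic.) The algorithm answers ``yes'' iff some query does; correctness is immediate, since every pair $(x,y)\in X\times Y$ is tested when we process the unique block containing $x$.

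It then remains to choose $\gamma$ so that both phases are sub-quadratic simultaneously: any constant $\gamma\in(0,1)$ with $\gamma(\alpha-1)<1$ works (such a $\gamma$ exists for every fixed $\alpha$), since the query phase only needs $\gamma>0$. This yields total running time $O(N^{2-\epsilon}d^{O(1)})$ with $\epsilon=\min\{1-\gamma(\alpha-1),\,\gamma(1-\delta)\}>0$ (resp.\ the analogous positive quantity in the $\delta\ge1$ case), contradicting \ovh. The only delicate point --- and really the content of the proof rather than an obstacle --- is this joint balancing for \emph{unbounded} $\alpha$: one leans on the fact that shrinking the indexed blocks costs nothing in correctness, which amortises away an arbitrarily high-degree indexing polynomial, together with the hypothesis $\delta+\beta<2$, which is precisely what keeps the querying phase below $N^{2}$. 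I note that one could instead route the argument through the $k$-SAT reduction and the ``adjustable split'' recalled in the introduction, but that would only refute \seth; partitioning the index side directly is cleaner and gives the stronger conclusion against \ovh itself, and it is also the scheme that \Cref{theorem:idx-NM-OV-contradicts-OVH} will generalise to the unbalanced setting.
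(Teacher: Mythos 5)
Your partitioning idea is the right one and mirrors what the paper actually does: Theorem~\ref{theorem:idx_standard_OV_contradicts_OVH} is obtained as the $N=M$ special case of Theorem~\ref{theorem:idx-NM-OV-contradicts-OVH}, whose proof likewise splits the large instance into small sub-problems, indexes the $X$-side one block at a time, and amortises away an arbitrarily high indexing exponent $\alpha$. Your decision to contradict \ovh directly, rather than detour through the $k$-SAT split sketched in the introduction (which only refutes \seth), is also the right call.

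There is one deviation worth flagging. You leave $Y$ whole, querying either single vectors (when $\delta<1$) or all of $Y$ at once (when $\beta<1$), and this is what forces your case split. The paper instead partitions $Y$ into blocks of the same size $N^{\gamma}$ as the $X$-blocks. That small change has two benefits. First, every sub-query $(X_i,Y_j)$ is then a \emph{balanced} OV instance with $|X_i|=|Y_j|$, which is what the hypothesis ``\ov is $(\alpha,\delta,\beta)$-polynomially indexable'' literally promises; your singleton and full-$Y$ queries are really $(N,M)$-\ov instances with $N\neq M$, i.e.\ they use the strictly stronger hypothesis of Theorem~\ref{theorem:idx-NM-OV-contradicts-OVH}. (The paper's own informal description of the folklore $k$-SAT argument is equally loose on this point, so you are in good company, but the formal statement of Theorem~\ref{theorem:idx_standard_OV_contradicts_OVH} is about \ov, not $(N,M)$-\ov.) Second, with the symmetric partition there are $N^{2-2\gamma}$ block pairs at cost $O(d^{O(1)}N^{\gamma(\delta+\beta)})$ each, so the query phase is $O(d^{O(1)}N^{2-\gamma(2-\delta-\beta)})$, which is sub-quadratic \emph{immediately} from $\delta+\beta<2$ for any $\gamma>0$ --- no case analysis on $\delta<1$ versus $\beta<1$ needed, and the hypothesis is used exactly once and in full. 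With that adjustment your proof becomes precisely the specialisation of Theorem~\ref{theorem:idx-NM-OV-contradicts-OVH} that the paper alludes to; the rest of your time accounting (preprocessing $O(d^{O(1)}N^{1+\gamma(\alpha-1)})$, choosing $\gamma$ small enough to beat $\alpha$) is correct as written.
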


The value of a parameterized $lic$ reduction can now be apprehended: once a parameterized $lic$ reduction is shown to exist, the indexing lower bound follows directly.
\begin{corollary}
\label{corollary:no_idx_P_if_standard_OV_reduction}
Any problem $P$ such that $\ov \leq_{lic}^d P$ holds is not ($\alpha, \delta, \beta$)-polynomially indexable, for any $\alpha>0$, $\beta\geq 1$, $\delta > 0$ with $\beta + \delta < 2$, unless \ovh is false.
\end{corollary}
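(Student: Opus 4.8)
The plan is to chain the two results just established: \Cref{lemma:idx_B_implies_idx_A}, which transports $(\alpha,\delta,\beta)$-polynomial indexability backwards along a $lic$ reduction, and \Cref{theorem:idx_standard_OV_contradicts_OVH}, which says that $(\alpha,\delta,\beta)$-polynomial indexability of \ov with parameter $d$, together with $\beta+\delta<2$, contradicts \ovh. I would argue by contraposition. Fix any $\alpha>0$, $\beta\geq 1$, $\delta>0$ with $\beta+\delta<2$, and assume that $P$ is $(\alpha,\delta,\beta)$-polynomially indexable.

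Next I would instantiate \Cref{lemma:idx_B_implies_idx_A} with $A=\ov$, $B=P$, and $k=d$. The hypothesis $\ov\leq_{lic}^{d}P$ is given by assumption, and the constants satisfy exactly the conditions $\alpha>0$, $\delta>0$, $\beta\geq 1$ required by the lemma. One small bookkeeping point deserves mention: by \Cref{def:poly_indexable_param}, being $(\alpha,\delta,\beta)$-polynomially indexable (i.e.\ with $k=O(1)$) is a special case of being $(\alpha,\delta,\beta)$-polynomially indexable with parameter $d$, since every bound of the form $O(|p_x|^{\alpha})$, resp.\ $O(|p_x|^{\delta}|p_y|^{\beta})$, is also of the form $O(d^{O(1)}|p_x|^{\alpha})$, resp.\ $O(|p_x|^{\delta}|p_y|^{\beta})$ after the obvious rewriting. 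Hence the assumed indexing scheme for $P$ meets the hypothesis ``$B$ is $(\alpha,\delta,\beta)$-polynomially indexable'' of the lemma, and the conclusion is that \ov is $(\alpha,\delta,\beta)$-polynomially indexable with parameter $d$.

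Finally, since $\beta+\delta<2$, \Cref{theorem:idx_standard_OV_contradicts_OVH} applies verbatim to the indexing scheme for \ov just obtained, and yields that \ovh fails. Taking the contrapositive of the whole chain of implications gives the claim: if \ovh holds, then $P$ is not $(\alpha,\delta,\beta)$-polynomially indexable for any $\alpha>0$, $\beta\geq 1$, $\delta>0$ with $\beta+\delta<2$.

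I do not expect any real obstacle here; the content is entirely in \Cref{lemma:idx_B_implies_idx_A} and \Cref{theorem:idx_standard_OV_contradicts_OVH}, and the corollary is just their composition. The only thing worth double-checking is that the parameter conventions of \Cref{def:poly_indexable_param}, \Cref{lemma:idx_B_implies_idx_A}, and \Cref{theorem:idx_standard_OV_contradicts_OVH} are mutually consistent, which reduces to the ``$k=O(1)$ is a subcase of parameter $d$'' remark above; and that the admissible range of constants is non-empty (e.g.\ $\beta=1$, $\delta<1$), so that the statement is non-vacuous.
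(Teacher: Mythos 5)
Your proposal is correct and follows exactly the same route as the paper's proof: apply \Cref{lemma:idx_B_implies_idx_A} with $A=\ov$, $B=P$, $k=d$ to transfer the assumed index on $P$ to an index on \ov with parameter $d$, then invoke \Cref{theorem:idx_standard_OV_contradicts_OVH} with $\beta+\delta<2$ to contradict \ovh. The ``bookkeeping point'' you raise is actually unnecessary --- the hypothesis of \Cref{lemma:idx_B_implies_idx_A} already asks for $B$ to be $(\alpha,\delta,\beta)$-polynomially indexable with $k=O(1)$, which matches your assumption on $P$ verbatim, so no conversion between parameter conventions is required --- but this does not affect the correctness of the argument.
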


\begin{proof}
Assume by contradiction that $P$ is ($\alpha, \delta, \beta$)-polynomially indexable. Apply \Cref{lemma:idx_B_implies_idx_A} to prove that \ov is ($\alpha, \delta, \beta$)-polynomially indexable with parameter $d$, and $\beta + \delta < 2$; this contradicts \Cref{theorem:idx_standard_OV_contradicts_OVH}.
\end{proof}

For a simple and concrete application of \Cref{corollary:no_idx_P_if_standard_OV_reduction}, consider the following problem, in which $d(S_1,S_2)$ denotes the edit distance between strings $S_1$ and $S_2$.

\begin{problem}[\pattern]
\item{\textsc{Input}:} Two strings $T$ and $P$.
\item{\textsc{Output}:} $\displaystyle \min_{\text{$S$ substring of $T$}} d(S,P)$.
\end{problem}

Backurs and Indyk~\cite{BI15} reduce \ov to \pattern by constructing a string $T$ based solely on the first input $X$ to \ov and a string $P$ based solely on the second input $Y$ to \ov, such that if there are two orthogonal vectors then the answer to \pattern on $T$ and $P$ is below a certain value, and if there are not, then the answer is equal to another specific value. Each of $T$ and $P$ can be constructed in time $O(d^{O(1)}N) = O(d^{O(1)}(dN))$. This is a \emph{lic} reduction with parameter $d$. Directly applying \Cref{corollary:no_idx_P_if_standard_OV_reduction}, we obtain \Cref{theorem:no_idx_pattern}.

\subsection{Indexing Generalized Orthogonal Vectors}

\Cref{corollary:no_idx_P_if_standard_OV_reduction} will suffice to prove \Cref{theorem:indexing-dags}. However, in order to prove that no query time $O(|E|^{\delta}|P|^{\beta})$ is possible for any $\delta < 1$, we need a strengthening of \Cref{theorem:idx_standard_OV_contradicts_OVH}. As such, we introduce the generalized $(N,M)$-\emph{Orthogonal Vectors} problem, as follows:
\begin{problem}[$(N,M)$-\ov]
\item{\textsc{Input}:} Two sets $X,Y \subseteq \{0,1\}^d$, such that $|X|=N$ and $|Y|=M$.
\item{\textsc{Output}:} $True$ if and only if there exists $(x,y) \in X \times Y$ such that $x \cdot y = 0$.
\end{problem}

The theorem below is the desired generalization of \Cref{theorem:idx_standard_OV_contradicts_OVH}, since it implies, for example, that we cannot have $O(N^{1/2}M^2)$-time queries after polynomial-time indexing. To the best of our efforts, we could not find a proof of this result in the literature, and hence we give one here. It is based on the same idea of an ``adjustable splitting'' into subvectors, a part of which is indexed, while the other part is queried. However, some technical subtleties arise from the combination of all parameters $\alpha,\delta,\beta$.

\begin{theorem}
\label{theorem:idx-NM-OV-contradicts-OVH}
If $(N,M)$-\ov is ($\alpha, \delta, \beta$)-polynomially indexable with parameter $d$, and either $\delta < 1$ or $\beta < 1$, then \ovh fails. That is, under \ovh, we cannot support $O(N^\delta M^\beta)$-time queries for $(N,M)$-\ov, for either $\delta < 1$ or $\beta < 1$, even after polynomial-time preprocessing.
\end{theorem}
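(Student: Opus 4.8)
The plan is to mimic the folklore adjustable-splitting argument for ordinary \ov, but to track all three exponents $\alpha,\delta,\beta$ carefully and to play with the \emph{two} set sizes $N$ and $M$ independently. Assume for contradiction that $(N,M)$-\ov is $(\alpha,\delta,\beta)$-polynomially indexable with parameter $d$, where either $\delta<1$ or $\beta<1$. Start from an instance of \ov (equivalently, from a $k$-\sat formula $\phi$ on $n$ variables via the Williams reduction) and split the $n$ variables into a first block of $\lambda n$ variables and a second block of $(1-\lambda)n$ variables, for a constant $\lambda\in(0,1)$ to be fixed later. Enumerating all partial assignments on the first block gives a set $X$ with $N = 2^{\lambda n}$ vectors, and on the second block a set $Y$ with $M = 2^{(1-\lambda)n}$ vectors, over dimension $d=\poly(n)$; the formula is satisfiable iff some $x\in X$ is orthogonal to some $y\in Y$. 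Build the index on $X$ in time $O(d^{O(1)}N^\alpha)=O(d^{O(1)}2^{\alpha\lambda n})$, then issue a single query with $Y$, costing $O(d^{O(1)}N^\delta M^\beta)=O(d^{O(1)}2^{(\delta\lambda + \beta(1-\lambda))n})$. Total running time is $O(d^{O(1)}2^{cn})$ where $c = \max\{\alpha\lambda,\ \delta\lambda + \beta(1-\lambda)\}$, so it suffices to choose $\lambda$ making both $\alpha\lambda<1$ and $\delta\lambda+\beta(1-\lambda)<1$, since then the whole algorithm runs in $2^{cn}\poly(n)=O(2^{(1-\eps)n})\poly(n)$ for some $\eps>0$, refuting \seth and hence \ovh.

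The heart of the argument is therefore a purely elementary feasibility claim: if $\delta<1$ or $\beta<1$, there exists $\lambda\in(0,1)$ with $\alpha\lambda<1$ and $\delta\lambda+\beta(1-\lambda)<1$. I would argue this in two cases. If $\beta<1$: take $\lambda$ small, say $\lambda\to 0^+$; then $\alpha\lambda\to 0<1$ and $\delta\lambda+\beta(1-\lambda)\to\beta<1$, so a sufficiently small positive $\lambda$ works (and one can write an explicit bound such as $\lambda < \min\{1/\alpha,\ (1-\beta)/(\delta-\beta)\}$ when $\delta>\beta$, or any $\lambda<1/\alpha$ when $\delta\le\beta$). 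If $\beta\ge 1$ but $\delta<1$: now we want $\lambda$ \emph{large}. The query exponent $\delta\lambda+\beta(1-\lambda)$ at $\lambda=1$ equals $\delta<1$, so by continuity it stays below $1$ for $\lambda$ in a left-neighbourhood of $1$; simultaneously we need $\alpha\lambda<1$, i.e. $\lambda<1/\alpha$. If $\alpha\le 1$ any $\lambda$ close enough to $1$ works; if $\alpha>1$ we need $\lambda<1/\alpha<1$, and then we must also check $\delta\lambda+\beta(1-\lambda)<1$ at that $\lambda$, i.e. $\beta - (\beta-\delta)\lambda < 1$, equivalently $\lambda > (\beta-1)/(\beta-\delta)$; so it suffices to exhibit a $\lambda$ with $(\beta-1)/(\beta-\delta) < \lambda < 1/\alpha$, which exists precisely when $(\beta-1)/(\beta-\delta) < 1/\alpha$. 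This last inequality is the one genuinely delicate point, because a priori it could fail for large $\alpha$ when $\beta$ is much bigger than $1$.

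This is where I expect the main obstacle, and where the statement of \Cref{theorem:idx-NM-OV-contradicts-OVH} needs care: as literally phrased (fixing some polynomial indexing degree $\alpha$ once and for all) the feasibility inequality can indeed be violated. The resolution is that we are not forced to build the index on the \emph{whole} of $X$: we may pad, i.e. partition $Y$ (or $X$) into blocks and run several indexed queries, trading the exponents. Concretely, instead of one index on all of $N=2^{\lambda n}$ vectors and one query on all of $M=2^{(1-\lambda)n}$ vectors, group $X$ into $N/N'$ subinstances of $N'$ vectors each and $Y$ into $M/M'$ subinstances of $M'$ vectors, build an index on each $X$-block and query each $Y$-block against it; the formula is satisfiable iff one of the $(N/N')(M/M')$ combined calls reports orthogonality. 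Choosing $N'$ polynomial in $n$ (so the indexing cost $N'^\alpha$ is polynomial) drives the $\alpha$-term out of the exponent entirely, leaving a total cost dominated by $(N/N')\cdot (M/M') \cdot (N'^\delta M'^\beta)\poly(n)$; optimizing $N',M'$ as powers of $2^n$ then reduces the whole problem to the single clean condition ``$\delta<1$ or $\beta<1$ implies the exponent is $<1$'', which is now an easy one-variable optimization with no $\alpha$ appearing. I would therefore structure the proof around this blocked/padded reduction, state the one-line exponent optimization as a small arithmetic lemma, and only at the end remark that the unblocked version already suffices in the regime relevant to \Cref{theorem:no_idx_SMLG_cycles} (where $\beta\ge 1$ and $0<\delta<1$, and one may arrange $\alpha$ to be as large as needed).
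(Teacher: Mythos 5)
Your final plan — block \emph{both} sides of the \ov instance into sub-instances of sizes $N'$ and $M'$, build an index on each $X$-block, query each $Y$-block against each such index, and tune the block sizes so that both the total indexing cost and the total query cost are sub-quadratic — is exactly the paper's construction (the paper writes the two totals as $O(d^{O(1)}(N^{\alpha-1}n + N^{\delta-1}M^{\beta-1}n^2))$ and then chooses $N,M$ as appropriate powers of the \ov set size). The detour through the single-index, variable-split SAT version and the observation that its feasibility inequality can fail for large $\alpha$ is a nice piece of motivation that the paper omits, and you correctly identify that blocking fixes it; so the heart of your argument and the paper's coincide.

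One specific prescription in your writeup is wrong and would lead a reader astray. You write ``Choosing $N'$ polynomial in $n$ (so the indexing cost $N'^\alpha$ is polynomial) drives the $\alpha$-term out of the exponent entirely,'' but with $N'$ polynomial in the number of SAT variables $n$ (hence sub-polynomial in the \ov set sizes $N=2^{\lambda n}$), the query total becomes $(N/N')(M/M')\,N'^{\delta}M'^{\beta}\,\poly(n) = NM\cdot N'^{\delta-1}M'^{\beta-1}\,\poly(n) = 2^{n}\cdot\poly(n)^{\pm O(1)}$, which is \emph{not} $O(2^{(1-\epsilon)n})$ when $\delta<1$ and $\beta\ge 1$; the factor $N'^{\delta-1}$ is then only a polynomial saving. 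To make the $\delta<1$ case go through you must take $N'$ to be a (small, constant) \emph{power of the \ov set size} — i.e. $N'=n_{\ov}^{\,a}$ with $0<a<\min\{1,\,1/(\alpha-1)\}$ when $\alpha>1$ — so that the $N'^{\delta-1}$ factor contributes a genuine exponent. You do say ``optimizing $N',M'$ as powers of $2^n$'' just afterwards, which contradicts the $\poly(n)$ suggestion; the paper resolves this cleanly by working directly in the \ov framing (set sizes $n$, block sizes $\tilde N = n^{(1-\epsilon')/(\alpha-1)}$, $\tilde M$ solved from the second condition, with ceilings for integrality), and then carrying out the multi-case analysis over $\alpha=1$ vs.\ $\alpha\ne1$ and $\delta,\beta=1$ vs.\ $\ne1$. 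That case analysis is more than ``a one-line arithmetic lemma'' — the integrality of block sizes and the degenerate exponents $\alpha=1,\delta=1,\beta=1$ each need a separate line of reasoning — but your reduction of the problem to ``choose $a,b\in[0,1]$ with $a(\alpha-1)<1$ and $a(\delta-1)+b(\beta-1)<0$'' is the right target, and from there the argument closes.
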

\begin{proof}
Let $X$ and $Y$ be the input for \ov and assume that their length is $n$. Our strategy is to split this instance of \ov into many $(N,M)$-\ov sub-problems and show that a too efficient indexing scheme for $(N,M)$-\ov applied to such sub-problems would lead to an online algorithm for \ov running in sub-quadratic time, hence contradicting \ovh. The key is to adjust the size of such $(N,M)$-\ov sub-problems to fit our needs. Let us begin by partitioning set $X$ into subsets of $N$ vectors each, and set $Y$ into subsets of $M$ vectors each, as shown in Figure~\ref{fig:Vector_Sets}.
\begin{figure}[ht]
    \centering
    \includegraphics[scale=.8]{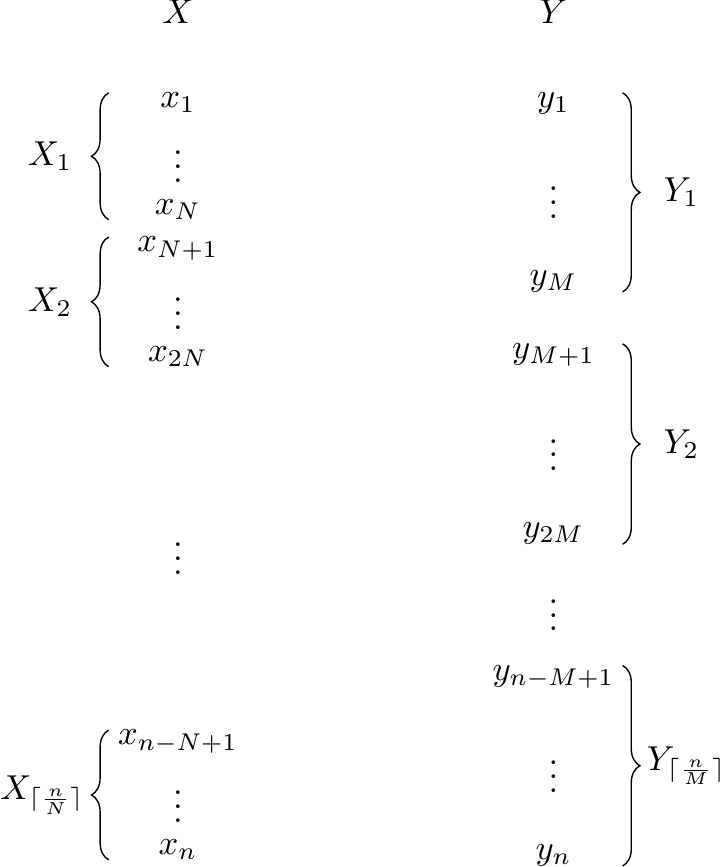}
    \caption{Two sets of $n$ vectors and their partitioning into sub-sets of $X$ for indexing, and sub-sets of $Y$ for querying.}
    \label{fig:Vector_Sets}
\end{figure} The instances of $(N,M)$-\ov sub-problems that we want to consider are all those pairs of vector sets $(X_i, Y_j)$ in which $X_i$ is a subset of $X$ and $Y_j$ is a subset of $Y$. Solving all the $(X_i, Y_j)$ instances solves the original problem.\footnote{The idea of splitting the two sets into smaller groups was also used in~\cite{AWY15} to obtain a fast randomized algorithm for \ov, based on the polynomial method, and therein the groups always had equal size.} Now we proceed to index sub-sets $X_i$ and to analyze how the time complexity of the original problem looks like when expressed in terms of the $(N,M)$-\ov sub-problems. Later we show how we can obtain a sub-quadratic time algorithm for \ov by choosing specific values for $N$ and $M$.

Since we are assuming that $(N,M)$-\ov is ($\alpha, \delta, \beta$)-polynomially indexable with parameter $d$, we can build index $Idx(X_i)$ for subset $X_i$ of $N$ vectors in time $O(d^{O(1)}(d N)^\alpha)$, and additionally we can answer a query for any subset $Y_j$ of $M$ vectors using index $Idx(X_i)$ in time $O(d^{O(1)}(d N)^\delta (d M)^\beta)$. Hence, given index $Idx(X_i)$, we can solve sub-problems $(X_i, Y_j)$ for a fixed $i$ and $j$ ($1\leq j \leq \lceil \frac n M \rceil$) by performing $\lceil\frac n M\rceil$ queries, one for each subset $Y_j$ of $Y$. Repeating this process for all $X_i$ covers all possible pairs $(X_i, Y_j)$, and since we have $\lceil\frac n N \rceil \lceil \frac n M\rceil $ such pairs, the total cost for solving \ov is:
\begin{equation}
O\left(d^{O(1)}(d N)^\alpha \frac n N + d^{O(1)}(d N)^\delta (d M)^\beta \frac n N \frac n M\right) =
O\left(d^{O(1)}\left(N^{\alpha-1} n + N^{\delta-1}M^{\beta-1}n^2\right)\right).
\label{eq:time_complexity_1}
\end{equation}
In order to have a sub-quadratic-time algorithm for \ov we need both of the terms of the sum above to be sub-quadratic. Namely, our time complexity should be $O\left(d^{O(1)}\left(n^{2-\epsilon'}+n^{2-\epsilon}\right)\right)$, for some $\epsilon, \, \epsilon'> 0$. Notice that in order to prove \ovh to be wrong it is enough to find one specific value for $\epsilon$ and one for $\epsilon'$ such that the following two conditions hold:
\begin{align*}
&\text{(a) : } N^{\alpha-1}n = O(n^{2-\epsilon'})\\
&\text{(b) : } N^{\delta-1}M^{\beta-1}n^2 = O(n^{2-\epsilon})
\end{align*}
As a first observation, notice that we need also to enforce $1\leq N \leq n$ and $1\leq M \leq n$. This is because every $X_i$ and every $Y_j$ must contain at least one vector in order to be an instance of $(N,M)$-\ov, and trivially their size must not exceed the size $n$ of the original \ov instance. Moreover, $N$ and $M$ must also be integers. This last requirement might cause some complications during our analysis, and due to this reason we will take advantage of a useful trick. We introduce new variables $\tilde{N}$ and $\tilde{M}$ so that we can make them assume real values. Our actual $N$ and $M$ would be the ceiling of $\tilde{N}$ and $\tilde{M}$. Putting all together, we want that for every $n\in \mathbb{N}, \alpha, \delta, \beta > 0$ such that either $\delta<1$ or $\beta<1$ there exists $\epsilon>0, \epsilon'>0, N, M, \tilde{N}, \tilde{M}$ such that:
\begin{align*}
(\text{a}) \; &N^{\alpha-1}n = O(n^{2-\epsilon'})\\
(\text{b}) \; &N^{\delta-1}M^{\beta-1}n^2 = O(n^{2-\epsilon})\\
(\atilde) \; &\tilde{N}^{\alpha-1}n = n^{2-\epsilon'}\\
(\btilde) \; &\tilde{N}^{\delta-1}\tilde{M}^{\beta-1}n^2 = n^{2-\epsilon}\\
(\text{c}) \; &N = \lceil \tilde{N} \rceil, \; M = \lceil \tilde{M} \rceil\\
(\text{d}) \; &1\leq \tilde{N} \leq n, \; 1\leq \tilde{M} \leq n
\end{align*}
Notice that forcing $1\leq \tilde{N} \leq n$ also ensures $1\leq N \leq n$, since we are taking the ceiling $N = \lceil \tilde{N} \rceil$. The same holds for $\tilde{M}$ and $M$.

We start our case analysis by identifying two cases for parameter $\alpha$, namely $\alpha \neq 1$ and $\alpha = 1$. These are eventually broken down into specific sub-cases for parameters $\delta$ and $\beta$. The strategy is to prove that if conditions ($\atilde$), ($\btilde$), (c), (d), (e) hold, then also conditions (a) and (b) hold. For simplicity, we report here only the most interesting cases in which $\alpha\neq$ 0, $\delta \neq 1$ and $\beta\neq 1$. The complete analysis of the remaining cases can be found in \Cref{appendix:missing-cases}.

\textbf{Case 1}: $\alpha \neq 1$. In this case we obtain the following constraint on $\tilde{N}$ from condition ($\atilde$):
\begin{equation}
\label{eq:constraint_b2-1}
\tilde{N}^{\alpha-1} n = n^{2-\epsilon'} \Leftrightarrow \tilde{N} = n^{\frac{1-\epsilon'}{\alpha-1}}.
\end{equation}
Now, given $\epsilon'$, we can compute $\tilde{N}$ using this equation so that we satisfy condition ($\atilde$). In doing so we need to make sure that condition (d) is also respected. To this end, we need to check that $\epsilon'$ satisfies $0 \leq \frac{1-\epsilon'}{\alpha-1} \leq 1$. Let us start with the left inequality.
\begin{align}
\label{eq:constraint_b2-2}
\frac{1-\epsilon'}{\alpha-1} \geq 0 &\Leftrightarrow (1-\epsilon'\geq0 \text{ and } \alpha-1>0) \text{ or } (1-\epsilon'\leq0 \text{ and } \alpha-1<0)\nonumber\\
&\Leftrightarrow (\epsilon'\leq1 \text{ and } \alpha>1) \text{ or } (\epsilon'\geq1 \text{ and } \alpha<1)
\end{align}
For the right inequality, we first handle the case in which $\epsilon'\leq1$ and $\alpha>1$ and we combine it with the constraint $\frac{1-\epsilon'}{\alpha-1} \leq 1$. Since $\alpha>1$ then $\alpha-1>0$ and we have that $\frac{1-\epsilon'}{\alpha-1} \leq 1 \Leftrightarrow \epsilon' \geq 2-\alpha$. Hence, the final constraint for $\epsilon'$ is $2-\alpha \leq \epsilon' \leq 1$, and we know that there exists valid values for $\epsilon'$ since $\alpha>1 \Rightarrow 2-\alpha < 1$.

Now we take into account the other case, namely $\epsilon' \geq 1$ and $\alpha<1$. We find ourselves in a symmetric situation in which $\alpha<1$ implies $\alpha-1<0$ which leads to $\frac{1-\epsilon'}{\alpha-1} \leq 1 \Leftrightarrow \epsilon' \leq 2-\alpha$. Putting all together we have $1 \leq \epsilon' \leq 2-\alpha$, and the existence of valid values for $\epsilon'$ is guaranteed by the fact that $\alpha<1 \Rightarrow 2-\alpha > 1$.

So far we analyzed conditions ($\atilde$) and (d), for $\tilde{N}$. To analyze the other conditions, we need to consider three sub-cases. Here we present the more challenging one, which is in turn split into two more sub-cases. The reader can find the others in \Cref{appendix:missing-cases}.

\textbf{Case 1.1}: $\delta\neq1$ and $\beta\neq1$. Now condition ($\btilde$) yields the following:
\begin{equation}
\label{eq:constraint_a3-1}
\tilde{N}^{\delta-1}\tilde{M}^{\beta-1}n^2=n^{2-\epsilon} \Leftrightarrow \tilde{M} = \tilde{N}^\frac{1-\delta}{\beta-1}n^\frac{\epsilon}{1-\beta}.
\end{equation}
We apply the substitution $\tilde{N}=n^{\frac{1-\epsilon'}{\alpha-1}}$ that we obtained from equation (\ref{eq:constraint_b2-1}). Hence:
\[
\tilde{M} = n^{\frac{1-\epsilon'}{\alpha-1}\frac{1-\delta}{\beta-1}}n^\frac{\epsilon}{1-\beta} = n^{\frac{1-\epsilon'}{\alpha-1}\frac{1-\delta}{\beta-1}-\frac{\epsilon}{\beta-1}}.
\]
We apply condition (d) obtaining the following constraint:
\begin{equation}
\label{eq:constraint_a3-2}
0 \leq \frac{1-\epsilon'}{\alpha-1}\frac{1-\delta}{\beta-1}-\frac{\epsilon}{\beta-1} \leq 1
\end{equation}
Here we face two more sub-cases.

\textbf{Case 1.1.1}: $\beta-1<0 \Leftrightarrow \beta<1$. We extract the constraint on $\epsilon$ from the two inequalities in \eqref{eq:constraint_a3-2} above. We start by analysing the left inequality.
\begin{align*}
\frac{1-\epsilon'}{\alpha-1}\frac{1-\delta}{\beta-1}-\frac{\epsilon}{\beta-1} &\geq 0\\
\epsilon &\geq\frac{1-\epsilon'}{\alpha-1}(1-\delta).
\end{align*}
From the second inequality instead we get:
\begin{align*}
\frac{1-\epsilon'}{\alpha-1}\frac{1-\delta}{\beta-1}-\frac{\epsilon}{\beta-1} &\leq 1\\
\epsilon-\frac{1-\epsilon'}{\alpha-1}(1-\delta) &\leq 1-\beta\\
\epsilon &\leq 1-\beta + \frac{1-\epsilon'}{\alpha-1}(1-\delta)
\end{align*}
Since $\epsilon>0$, we need to be sure that the right term of this last inequality is strictly greater than $0$. Since $1-\beta>0$ and $\frac{1-\epsilon'}{\alpha-1}>0$ (from \eqref{eq:constraint_b2-2}), the interesting case is when $1-\delta<0$. Notice that $\frac{1-\epsilon'}{\alpha-1} \rightarrow 0$ as $\epsilon' \rightarrow 1$. This means that we can always choose $\epsilon'$ as close to $1$ as needed to make $1-\beta + \frac{1-\epsilon'}{\alpha-1}(1-\delta)>0$ hold.

At this point we have proved conditions ($\atilde$), ($\btilde$) and (d), thus we are left to show that conditions (a) and (b) also hold. To this end, let us choose $\epsilon$, $\epsilon'$, $\tilde{N}$ and $\tilde{M}$ in such a way that conditions ($\atilde$), ($\btilde$) and (d) are verified. Then we choose $N = \lceil N \rceil$ and $M = \lceil M \rceil$ so that condition (c) is verified. We analyse in depth condition (b) since it is more complicated; condition (a) can be proven applying the same technique. We first remark the following property:

\begin{fact}
\label{fact:ceil_to_bigO}
$\forall n,a,b\in\mathbb{R}.\;\lceil n^a \rceil^b = O(n^{ab})$.
\end{fact}
\begin{proof}
For $b=0$ the statement is trivially true. If $b>0$, we have $\lceil n^a \rceil^b \leq (n^a+1)^b = O(n^{ab})$. If $b<0$, we have $\lceil n^a \rceil^b \leq (n^a-1)^b = O(n^{ab})$.
\end{proof}

Now we can show that
\begin{align*}
N^{\delta-1}M^{\beta-1}n^2 &= \lceil n^\frac{1-\epsilon'}{\alpha-1}\rceil^{\delta-1}\lceil n^{\frac{1-\epsilon'}{\alpha-1}\frac{1-\delta}{\beta-1}-\frac{\epsilon}{\beta-1}}\rceil^{\beta-1}n^2\\
&=O\left( n^{-\frac{1-\epsilon'}{\alpha-1}(1-\delta)} n^{\frac{1-\epsilon'}{\alpha-1}(1-\delta)-\epsilon}n^2\right)\\
&= O(n^{2-\epsilon})
\end{align*} where the first step is justified by Fact~\ref{fact:ceil_to_bigO}. We conclude that both conditions (a) and (b) hold.

\textbf{Case 1.1.2}: $\beta-1>0 \Leftrightarrow \beta>1$. This case is symmetric to the previous one and implies that we are in the situation $\delta<1$. When  extracting the constraints on $\epsilon$, we will have the same inequalities but with the opposite direction. Indeed, we multiply by $\beta-1$ which now has the opposite sign.
\[
1-\beta + \frac{1-\epsilon'}{\alpha-1}(1-\delta) \leq \epsilon \leq \frac{1-\epsilon'}{\alpha-1}(1-\delta)
\]
Given that $\epsilon>0$, we need to verify to have room to choose such an $\epsilon$, that is $\frac{1-\epsilon'}{\alpha-1}(1-\delta) > 0$. We know that the first factor of this multiplication is between $0$ and $1$. Hence, we can always choose an $\epsilon'$ such that $\frac{1-\epsilon'}{\alpha-1} > 0$. Moreover, in this sub-case we have $\delta<1$ which ensures that also $1-\delta$ is strictly positive. Hence, the quantity $\frac{1-\epsilon'}{\alpha-1}(1-\delta)$ is strictly positive, which means that there always exists an $\epsilon$ such that condition (d) holds.
Assuming condition (c), conditions (a) and (b) can be proved to hold in the same manner as in the previous sub-case.

In conclusion, we can say that depending on $\alpha$, $\delta$ and $\beta$ we find ourselves into one of the listed cases. We showed that in each one of those we can always find values for $\epsilon$ and $\epsilon'$ such that there exists integer values for $N$ and $M$ that provide an algorithm for \ov running in time $O(n^{2-\epsilon} + n^{2-\epsilon'})$, proving \ovh to be false.
\end{proof}

\begin{corollary}
\label{corollary:no_idx_P_if_NM-OV_reduction}
Any problem $P$ such that $(N,M)$-$OV \leq_{lic}^d P$ holds is not ($\alpha, \delta, \beta$)-polynomially indexable, for any $\alpha>0$, $\beta\geq 1$, $0<\delta<1$, unless \ovh is false.
\end{corollary}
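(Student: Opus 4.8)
The plan is to mirror verbatim the structure of the proof of \Cref{corollary:no_idx_P_if_standard_OV_reduction}, simply swapping the source hardness result: instead of \Cref{theorem:idx_standard_OV_contradicts_OVH}, we now use its generalization \Cref{theorem:idx-NM-OV-contradicts-OVH}, which is exactly what buys us the wider forbidden region ``$\delta<1$ or $\beta<1$'' in place of ``$\beta+\delta<2$''. Concretely, I would argue by contradiction: assume $P$ is $(\alpha,\delta,\beta)$-polynomially indexable for some $\alpha>0$, $\beta\geq1$, and $0<\delta<1$.

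Next I would apply \Cref{lemma:idx_B_implies_idx_A} with $A=(N,M)$-\ov and $B=P$. The hypotheses of that lemma — $\alpha>0$, $\delta>0$, $\beta\geq1$ — are all met (in particular $0<\delta<1$ gives $\delta>0$), and the assumed reduction $(N,M)\textsf{-}\ov \leq_{lic}^d P$ supplies the required $lic$ reduction with parameter $k=d$. The lemma then yields that $(N,M)$-\ov is $(\alpha,\delta,\beta)$-polynomially indexable \emph{with parameter $d$}. I would then invoke \Cref{theorem:idx-NM-OV-contradicts-OVH}: since $0<\delta<1$ satisfies the premise ``$\delta<1$ or $\beta<1$'', the theorem forces \ovh to fail, contradicting our standing assumption that \ovh holds. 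Hence no such indexing scheme for $P$ exists unless \ovh is false.

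The only thing to verify carefully — and it is routine bookkeeping rather than a genuine obstacle — is that the parameter is threaded consistently: the ``$d$'' in $(N,M)\textsf{-}\ov \leq_{lic}^d P$ is preserved by \Cref{lemma:idx_B_implies_idx_A} as the parameter $k=d$ in the indexability conclusion, which is precisely the form ``with parameter $d$'' demanded by \Cref{theorem:idx-NM-OV-contradicts-OVH}, and that the admissible ranges of $\alpha,\delta,\beta$ are compatible at each application. I expect this corollary to be a one-paragraph composition of \Cref{lemma:idx_B_implies_idx_A} and \Cref{theorem:idx-NM-OV-contradicts-OVH}, essentially identical in wording to the proof of \Cref{corollary:no_idx_P_if_standard_OV_reduction}.
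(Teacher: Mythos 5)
Your proposal is correct and is exactly the intended argument: the paper leaves this corollary without an explicit proof, but the proof of \Cref{corollary:no_idx_P_if_standard_OV_reduction} that it mirrors uses precisely this composition of \Cref{lemma:idx_B_implies_idx_A} with the relevant hardness-of-indexing theorem, here \Cref{theorem:idx-NM-OV-contradicts-OVH} in place of \Cref{theorem:idx_standard_OV_contradicts_OVH}. Your check that $0<\delta<1$ both satisfies the $\delta>0$ hypothesis of \Cref{lemma:idx_B_implies_idx_A} and triggers the ``$\delta<1$ or $\beta<1$'' premise of \Cref{theorem:idx-NM-OV-contradicts-OVH} is the only point requiring care, and you handle it correctly.
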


\section{Indexing Labeled Graphs for String Matching}
\label{section:indexing-labeled-graphs}

Recall the following conditional lower bound for \smlg from Equi et al.~\cite{EGMT19}.

\begin{theorem}[\cite{EGMT19}] 
\label{theorem:Emlowerbound}
\begin{sloppypar}
For any $\epsilon > 0$, \smlg on labeled deterministic DAGs cannot be solved in either $O(|E|^{1-\epsilon} \, |P|)$ or $O(|E| \, |P|^{1-\epsilon})$ time unless \ovh fails. This holds even if restricted to a binary alphabet, and to DAGs in which the sum of out-degree and in-degree of any node is at most three.
\end{sloppypar}
\end{theorem}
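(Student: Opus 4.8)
The plan is to prove the theorem by a fine-grained reduction from \ov to \smlg. From an instance $X,Y\subseteq\{0,1\}^d$ with $|X|=|Y|=N$ I would construct, in time $O(\text{poly}(d)\cdot N)$, a labeled graph $G=(V,E,\ell)$ and a pattern $P$ so that $G$ admits a path spelling $P$ if and only if some $x\in X$ is orthogonal to some $y\in Y$, with $G$ a DAG over the binary alphabet, deterministic, with $|E|=\text{poly}(d)\cdot N$ edges and every node of total degree at most three, and with $|P|=\text{poly}(d)\cdot N$. Given such a reduction, an $O(|E|^{1-\epsilon}|P|)$-time or $O(|E|\,|P|^{1-\epsilon})$-time \smlg algorithm would solve \ov in time $O\!\big((\text{poly}(d)\cdot N)^{2-\epsilon}\big)=O(N^{2-\epsilon}\,\text{poly}(d))$, contradicting \ovh (and hence \seth). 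This is the standard fine-grained template; in fact, as the later sections observe, the reduction can be made an independent-components reduction in the sense of \Cref{corollary:no_idx_P_if_standard_OV_reduction}, but for the online statement that refinement is not needed.

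The technical content is the gadget design. First I would fix a constant-length binary encoding $\mathrm{enc}(v)$ of a vector $v\in\{0,1\}^d$, sending each of the bits $\zero,\one$ to a distinct two-character code and bracketing the result with delimiter codes, so that every vector encoding has the same length $L=\Theta(d)$. The core piece is a \emph{coordinate gadget}: while a prospective path is spelling the code of the $k$-th coordinate of some $\mathrm{enc}(y)$, the graph either offers both out-edges (so $y[k]$ may read as $\zero$ or $\one$) or, when we want to hard-wire $x[k]=\one$, offers only the out-edge corresponding to $y[k]=\zero$. Chaining $d$ such gadgets yields, for each $x$, a \emph{vector-test gadget} $G_x$ that admits a spelling path for input $\mathrm{enc}(y)$ precisely when $x[k]=\zero$ or $y[k]=\zero$ for every $k$, i.e.\ exactly when $x\cdot y=0$; leaving every coordinate free gives a \emph{wildcard block} $W$ matching $\mathrm{enc}(z)$ for every $z$. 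Every spelling path through $G_x$ or $W$ has the \emph{same} length $L$, which is what will control alignment later, and every branching is on the distinct characters $\zero$ and $\one$, so the pieces are deterministic and of total degree at most three.

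I would then take $P$ to be a concatenation over the vectors of $Y$ (separated by delimiter codes), so that any spelling path of $P$ must traverse $|Y|$ consecutive length-$L$ blocks, and assemble $G$ so that for every $x_i\in X$ and every position $j\in[N]$ there is a routing that reads the first $j-1$ blocks through wildcard blocks, the $j$-th block through $G_{x_i}$, and the remaining blocks through wildcard blocks. Since every block and every gadget has length exactly $L$, such a routing can complete the whole of $P$ iff the $j$-th block $\mathrm{enc}(y_j)$ is spellable through $G_{x_i}$, i.e.\ iff $x_i\cdot y_j=0$; quantifying over $i$ and $j$ gives the equivalence with \ov, and sharing the wildcard ``skip'' regions across all routings keeps $|E|$ and $|P|$ at $\text{poly}(d)\cdot N$.

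The main obstacle — the real work of the construction — is realizing this routing while $G$ stays a \emph{deterministic} DAG of degree at most three. A deterministic graph has at most one spelling path per source, so the only freedom a path has is its starting node: one cannot use a single high-fan-out ``selector'' node to pick among $x_1,\dots,x_N$, nor a high-fan-in ``collector'' node, nor silently merge two routings that must later differ. Replacing such a hub by a balanced binary tree does not help directly, because determinism forces the two children of a node to carry the distinct labels $\zero$ and $\one$, so distinct leaves spell distinct strings that are absent from the fixed pattern. The resolution is to distribute the selection and the rejoining over many low-degree nodes — providing many sources into the structure and letting different routings re-enter a shared wildcard backbone at different block boundaries — while preserving the equal-length property so a completed path is forced to be aligned with the blocks of $P$. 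Getting the encoding, the separators, the number and placement of sources, and the attachment points to satisfy correctness, the binary alphabet, determinism, the degree-three bound, and the $\text{poly}(d)\cdot N$ size bound \emph{simultaneously} is the delicate part; once it is in place, the two claimed lower bounds follow from the parameter count above under \ovh.
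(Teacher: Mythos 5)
Your high-level plan (encode each vector as a fixed-length block, a ``vector-test'' gadget $G_x$ per $x\in X$, ``wildcard'' blocks to absorb the rest of $P$, and parameter counting to derive the two lower bounds) does capture the architecture of the reduction described in the paper (the pattern $\B\B P_{y_1}\E\,\B P_{y_2}\E\cdots\B P_{y_M}\E\E$, the per-vector gadgets $G_W^{(j)}$, and the top/bottom universal gadgets $G_{U1},G_{U2}$), and the arithmetic $O\bigl((\mathrm{poly}(d)\,N)^{2-\epsilon}\bigr)$ is correct. The cited theorem is not reproved in this paper, so there is no proof here to match yours against line by line; the paper only describes the \emph{non}deterministic version of the construction and explicitly leaves the deterministic one to~\cite{EGMT19}.

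However, your proposal has a genuine gap, and it is precisely the one you yourself flag as ``the real work of the construction.'' The theorem's whole content beyond the trivial $O(|E|\,|P|)$ barrier is that the reduction survives under three simultaneous restrictions: deterministic graph, binary alphabet, and total degree at most three. You correctly observe the obstruction — no high-fan-out selector among the $N$ gadgets, no high-fan-in collector, no silent merges of routings that must later diverge, and a balanced binary tree does not help because determinism forces its two children to carry $\zero$ and $\one$ and thus spell strings absent from $P$ — but you then state the fix only as a direction (``distribute selection and rejoining over many low-degree nodes,'' ``many sources,'' ``re-enter a shared wildcard backbone at different block boundaries'') and conclude with ``once it is in place, the two claimed lower bounds follow.'' That last sentence assumes the very thing that needs to be constructed. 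In particular you do not say how many sources are needed, how the shared wildcard regions are wired so that merges keep in-degree bounded and do not create spurious accepting paths, how the block alignment is enforced once paths can enter mid-structure, or how the four-letter alphabet $\{\B,\E,\zero,\one\}$ is brought down to binary without violating determinism or the degree bound. Without those gadget-level details and a correctness argument for them, the proposal is a plausible outline of the~\cite{EGMT19} strategy rather than a proof of the stated theorem.
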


Given an \ov instance with sets $X$ and $Y$, the reduction from~\cite{EGMT19} builds a graph $G$ using solely~$X$, and a pattern $P$ using solely~$Y$, both in linear time $O(dN)$, such that $P$ has a match in $G$ if and only if there exists a pair of orthogonal vectors.\footnote{Notice that~\cite{EGMT19} originally built $P$ based on $X$, and $G$ based on $Y$. Since it is immaterial for correctness, and in order to keep in line with the notation in this paper, we assumed the opposite here.} This shows that the two conditions of the linear independent-components reduction property hold, thus $\ov \leq_{lic}^d \smlg$. Directly applying \Cref{corollary:no_idx_P_if_standard_OV_reduction}, we obtain \Cref{theorem:indexing-dags}.

Next, we show that constraint $\beta+\delta<2$ can be dropped from \Cref{corollary:no_idx_SMLG} when we are indexing non-deterministic graphs with cycles. The idea is that if we allow $(N,M)$-\ov instances with $M > N$, then the reduction from~\cite{EGMT19} no longer holds, because the pattern $P$ is too large to fit inside the DAG $G$. As such, we need to make a minor adjustment to $G$. For this, we must give some additional details of that reduction. For our purposes, it is enough to explain the construction of a non-deterministic graph from~\cite[Section~2.3]{EGMT19}.

Pattern $P$ is over the alphabet $\Sigma = \{ \B,\E,\zero,\one \}$, has length $|P| = O(dM)$, and can be built in $O(dM)$ time from the second set of vectors $Y = \{ y_1, \ldots, y_M\}$. Namely, we define \[P = \B\B P_{y_1}\E\,\B P_{y_2}\E \ldots \B P_{y_M}\E\E\] where $P_{y_i}$ is a string of length $d$ that is associated with each $y_i \in Y$, for $1 \leq i \leq M$.
The $h$-th symbol of $P_{y_i}$ is either \zero or \one, for each $h \in \{1,\dots,d\}$, such that $P_{y_i}[h] = \one$ if and only if $x_i[h] = 1$.

Starting from the first set of vectors $X$, we define the directed graph $G_W = (V_W,E_W,L_W)$, which can be built in $O(dN)$ time and consists of $N$ connected components $G_W^{(j)}$, one for each vector $x_j \in X$. Component $G_W^{(j)}$ can be constructed so that it contains an occurrence of a subpattern $P_{y_i}$ if and only if $x_j \cdot y_i = 0$. In addition, we need a universal gadget $G_U = (V_U,E_U,L_U)$ of $2N-2$ components $G_{U1}^{(k)}$, where each component can match any of the subpatterns $P_{y_i}$. We actually need two copies $G_{U1}$ and $G_{U2}$ of such universal gadgets, a ``top'' one, and a ``bottom'' one, respectively. All the gadgets are connected as indicated in \figurename~\ref{fig:G_nondet} and the resulting graph $G$ has total size $O(dN)$.

\begin{figure}[t]
    \centering
    \includegraphics[width=\textwidth]{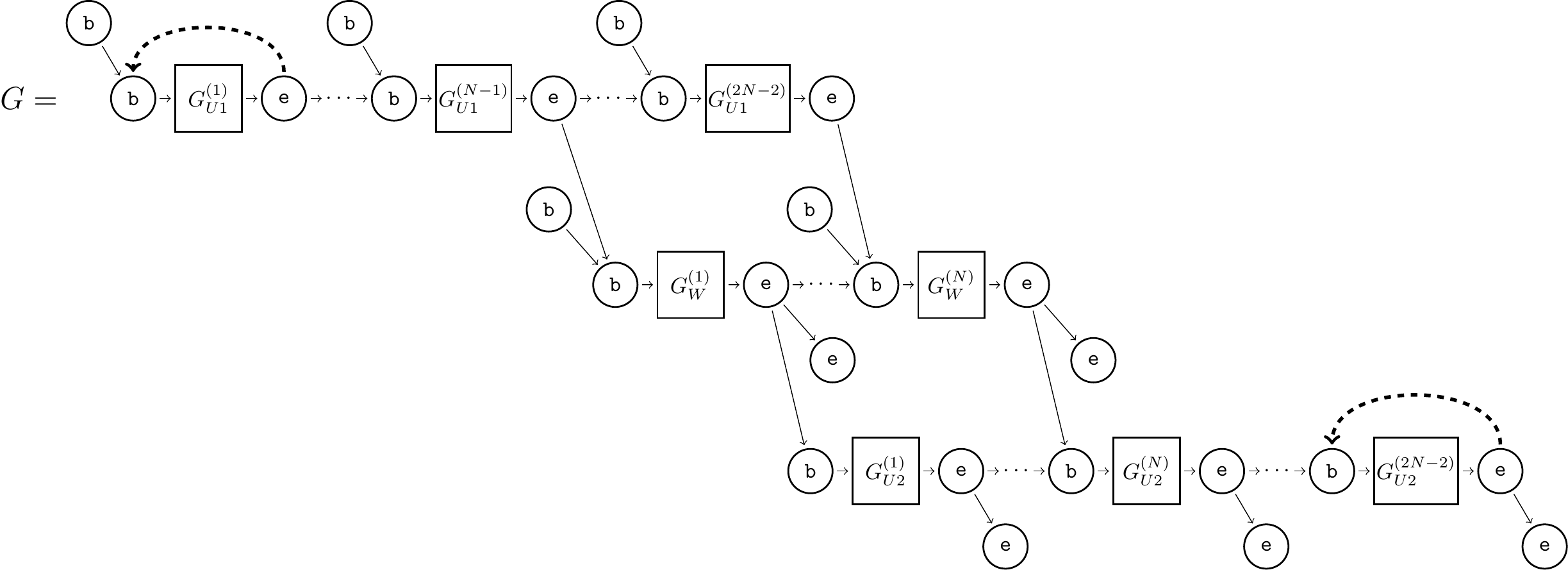}
    \caption{Non-deterministic graph $G$. The dashed thick edges are not present in the acyclic graph from~\cite{EGMT19}, and must be added to handle $(N,M)$-\ov instances with $M > N$.}
    \label{fig:G_nondet}
\end{figure}

The intuition is that a prefix of $P$ is handled by the ``top'' universal gadgets $G_{U1}$, a possible matching a subpattern $P_{y_i}$ of $P$ by one of the ``middle'' gadgets $G_W^{(j)}$, and a suffix of $P$ by the ``bottom'' universal gadgets, because $P$ has a $\B\B$ prefix and an $\E\E$ suffix. As mentioned above, by doing this we cannot accommodate $(N,M)$-\ov instances with $M > N$. However, we can easily fix this by adding a cycle in each of the ``top'' and ``bottom'' universal gadgets, so that a longer pattern will match a universal gadget in this cycle as many times needed to fit inside the graph. More precisely, we can add an edge from the \E-node to the right of $G_{U1}^{(1)}$ back to the \B-node to the left of $G_{U1}^{(1)}$, and likewise from the \E-node to the right of $G_{U2}^{(2N-2)}$ back to the \B-node to the left of $G_{U2}^{(2N-2)}$ (see \figurename~\ref{fig:G_nondet}). It can easily be checked that it still holds that $P$ has a match in the resulting graph $G$ if and only if there are two orthogonal vectors, no matter the relationship between $N$ and $M$. Applying \Cref{corollary:no_idx_P_if_NM-OV_reduction}, we obtain \Cref{theorem:no_idx_SMLG_cycles}.

\bibliography{references}

\newpage
\appendix

\section{Connection to \sic\label{app:sic}}

Given sets $S^1, S^2, \ldots, S^n \subseteq [1..u]$, where $u=\log^c n$ for sufficiently large $c$, the \emph{Set Intersection Conjecture (\sic)} \cite{PR14} is that there is no index of size $O(n^{2-\epsilon})$ for any $\epsilon>0$ to answer in constant time if two sets $S^i$ and $S^j$ intersect or not (i.e. there is no improvement over the table of all precomputed solutions). The reduction of \cite{Bil13} is as follows: build a simple DAG with one copy of the sets as source nodes and another copy as sink nodes. Then add nodes in between corresponding to the elements of the sets. Connect source node corresponding to $S^i$ to all nodes corresponding to elements $v \in S^i$, and all nodes corresponding to $v \in S^i$ to the sink corresponding to $S^i$, for all $i$. Label sources and sinks with their set identifier, and nodes in between with some common letter, say \texttt{A}. Since the graph size is $O(n \log^c n)$, a truly sub-quadratic size index supporting string queries of the form $P=i\mathtt{A}j$ even, say, in exponential time in $|P|$ would prove \sic false. Modifying the relationship between universe size $u$ and number of sets $n$ gives rise to several refined lower bounds for the tradeoff betweeen index construction and query time \cite{GLP19}, which directly transfer to the graph indexing problem through the simple connection stated above.

\section{Missing cases of the proof of \Cref{theorem:idx-NM-OV-contradicts-OVH}}
\label{appendix:missing-cases}

\textbf{Case 2}: $\alpha = 1$. Condition ($\atilde$) simply becomes $n=n^{2-\epsilon'}$, which is verified for $\epsilon'=1$. We now split the analysis of condition ($\btilde$) into two sub-cases.

\textbf{Case 2.1}: $\delta < 1$ and no constraint on $\beta$. We can rewrite condition ($\btilde$) as:
\[
\tilde{N} = \tilde{M}^\frac{1-\beta}{\delta-1}n^\frac{\epsilon}{1-\delta}
\]
since $\delta<1$ guarantees $\delta - 1\neq 0$. If we choose $\tilde{M}=1$ we respect condition (d) and we obtain $\tilde{N} = n^\frac{\epsilon}{1-\delta}$ for any value of $\beta$. Hence, we can first choose a value for $\epsilon$ and later use this equation to obtain the right value for $\tilde{N}$ that will satisfy condition ($\btilde$). Nevertheless, we cannot just pick any value for $\epsilon$. Indeed, we need to guarantee also that condition (d) is holding. This can be achieved by verifying that $0\leq \frac{\epsilon}{1-\delta} \leq 1$. Since $\delta<1$ and $\epsilon>0$ we know that $\frac{\epsilon}{1-\delta} > 0$. Moreover, $\frac{\epsilon}{1-\delta} \leq 1 \Leftrightarrow \epsilon \leq 1-\delta$, which means that any $\epsilon$ such that $0 < \epsilon \leq 1-\delta$ satisfies condition (d). We know that there exists such an $\epsilon$ since $1-\delta>0$.

We are now left to prove that conditions (a) and (b) hold. We proceed as in case 1.1.1 by assuming condition (c) and proving conditions (a) and (b). Condition (a) is easily verified since $\alpha=1$. Since $N = \lceil \tilde{N} \rceil= \lceil n^\frac{\epsilon}{1-\delta} \rceil$ and $M=\lceil \tilde{M} \rceil=1$, and noticing that $\delta-1<0$, we can analyse condition (b) as follows.
\begin{align*}
N^{\delta-1}M^{\beta-1}n^2 &= \lceil n^\frac{\epsilon}{1-\delta} \rceil^{\delta-1}n^2\\
&\leq \left(n^\frac{\epsilon}{1-\delta} - 1 \right)^{\delta-1}\cdot n^2\\
&=O(n^{\frac{\epsilon}{1-\delta}{\delta-1}}n^2)\\
&=O(n^{2-\epsilon}).
\end{align*}
Hence, condition (b) is verified and so all the conditions hold.

\textbf{Case 2.2}: $\beta < 1$ and no constraint on $\delta$. This case is symmetric to the previous one. Indeed, we now rewrite condition ($\btilde$) as:
\[
\tilde{M} = \tilde{N}^\frac{1-\delta}{\beta-1}n^\frac{\epsilon}{1-\beta}
\]
where $\beta<1$ gives $\beta - 1\neq 0$. This time we choose $\tilde{N}=1$, from which we obtain $\tilde{M} = n^\frac{\epsilon}{1-\beta}$ for any value of $\delta$. Again, we will use this equation to find the right value for $\tilde{N}$ once we have chosen $\epsilon$. When choosing such $\epsilon$, we will have to respect the constraint $0 \leq \frac{\epsilon}{1-\beta}\leq 1$ in order to make condition (d) hold. Hence any $\epsilon$ such that $0 < \epsilon \leq 1-\beta$ satisfies condition (d), and $\beta < 1$ guarantees that such an $\epsilon$ exists.

As in the previous case, condition (a) is easily verified by $\alpha=1$. For verifying condition (b) we choose $\epsilon, \epsilon', \tilde{N}, \tilde{M}$ such that conditions ($\atilde$), ($\btilde$) and (d) are verified. Then we choose $N = \lceil \tilde{N} \rceil = 1$ and $M = \lceil \tilde{M} \rceil = \lceil n^\frac{\epsilon}{1-\beta} \rceil$ so that condition (c) is verified. The analysis of condition~(b) is analogous to the previous case and yields $N^{\delta-1}M^{\beta-1}n^2 \leq \left(n^\frac{\epsilon}{1-\beta} + 1 \right)^{\beta-1}\cdot n^2 = O(n^{2-\epsilon})$, which verifies condition (b).

\textbf{Case 1.2}: $\delta<1$ and $\beta=1$. In this case condition ($\btilde$) simplifies to
\[
\tilde{N}^{\delta-1} n^2 = n^{2-\epsilon} \Leftrightarrow \tilde{N} = n^{\frac{\epsilon}{1-\delta}},
\]
where $1-\delta > 0$ holds thanks to $\delta < 1$. Condition ($\atilde$) and condition ($\btilde$) both concern $\tilde{N}$, and by combining them we obtain:
\[
n^{\frac{\epsilon}{1-\delta}} = n^{\frac{1-\epsilon'}{\alpha-1}} \Leftrightarrow \frac{\epsilon}{1-\delta} = \frac{1-\epsilon'}{\alpha-1} \Leftrightarrow \epsilon = \frac{1-\epsilon'}{\alpha-1}(1-\delta).
\]
We already know that $0<\frac{1-\epsilon'}{\alpha-1}\leq1$, which guarantees that $0<\epsilon\leq1-\delta$ and also verifies condition~(d). Indeed, condition (d) requires $0\leq \frac{\epsilon}{1-\delta} \leq1$, but this is already kept in check by the fact that $\frac{\epsilon}{1-\delta} = \frac{1-\epsilon'}{\alpha-1}$. Since $\delta < 1$, we have $1-\delta > 0$, and hence we can conclude that all conditions ($\atilde$), ($\btilde$) and (d) hold.

Using Fact~\ref{fact:ceil_to_bigO} we can prove that when choosing $N$ as in (c) condition (a) holds.
\begin{align*}
N^{\alpha-1}n &= \lceil n^\frac{1-\epsilon'}{\alpha-1} \rceil^{\alpha-1}n\\
&=O(n^{\frac{1-\epsilon'}{\alpha-1}{\alpha-1}}n)\\
&=O(n^{2-\epsilon'}).
\end{align*}
Observing that $\beta=1$ makes condition (b) simplify to $N^{\delta-1}n^2=O(n^{2-\epsilon})$, and we can perform a similar analysis to obtain $N^{\delta-1}n^2 \leq \left( n^\frac{\epsilon}{1-\delta}+1 \right)^{\delta-1}\cdot n^2 = O(n^{2-\epsilon})$, which verifies condition (b).

\textbf{Case 1.3}: $\delta=1$ and $\beta<1$. Similarly to the previous case, from condition ($\btilde$) we get:
\[
\tilde{M}^{\beta-1} n^2 = n^{2-\epsilon} \Leftrightarrow \tilde{M} = n^{\frac{\epsilon}{1-\beta}}.
\]
Here, condition (d) is equivalent to $0\leq\frac{\epsilon}{1-\beta}\leq1$, which is guaranteed by choosing $\epsilon$ such that $0<\epsilon\leq1-\beta$. Thus, conditions ($\atilde$), ($\btilde$) and (d) hold. Assuming condition (c) we can perform a similar analysis to the previous case and conclude that conditions (a) and (b) also hold.

\end{document}